\title{Dudeney's Dissection is Optimal}
\author{Erik {D. Demaine}}
    {Massachusetts Institute of Technology, USA \and \url{https://erikdemaine.org/}
    }{edemaine@mit.edu}{0000-0003-3803-5703}{}
\author{Tonan {Kamata}}
    {Japan Advanced Institute of Science and Technology, Japan \and 
    \url{https://researchmap.jp/TonanKamata} }{kamata@jaist.ac.jp}{0009-0001-9786-9751}
    {is mainly supported by JSPS KAKENHI Grants Number 22J10261 and 24K23857, and partially supported by Grants Number 20H05961 and 20H05964.}
\author{Ryuhei {Uehara}}{Japan Advanced Institute of Science and Technology, Japan \and             \url{https://www.jaist.ac.jp/~uehara/}
    }{uehara@jaist.ac.jp}{0000-0003-0895-3765}
    {is mainly supported by JSPS KAKENHI Grants Number 24H00690 and partially supported by Grants Number 22H01423, 20H05961, and 20H05964.}
\authorrunning{E. D. Demaine, T. Kamata and R. Uehara} 
\keywords{Geometric Dissection, Dudeney Dissection, Dissection with Fewest Pieces} 
\newcommand{\ED}{\mathscr{ED}}
\newcommand{\VD}{\mathscr{VD}}
\let\epsilon=\varepsilon
\def\defn#1{\textbf{\textit{\boldmath #1}}}
\begin{document}

\maketitle
\begin{abstract}
  In 1907, Henry Ernest Dudeney posed a puzzle: ``cut any equilateral triangle \dots\ into as few pieces as possible that will fit together and form a perfect square'' (without overlap, via translation and rotation).
  Four weeks later, Dudeney demonstrated a beautiful four-piece solution,
  which today remains perhaps the most famous example of dissection.
  In this paper (over a century later), we finally solve Dudeney's puzzle,
  by proving that the equilateral triangle and square have no common dissection with three or fewer polygonal pieces.
  We reduce the problem to the analysis of discrete graph structures representing the correspondence between the edges and the vertices of the pieces forming each polygon.
\end{abstract}

\section{Introduction}
    \defn{Dissection} \cite{Lindgren-1972,Frederickson-1997}
    is the process of transforming one shape $A$ into another shape $B$
    by cutting $A$ into pieces and re-arranging those pieces to form~$B$
    (without overlap).  Necessarily, $A$ and $B$ must have the same area
    (a property Euclid used to prove area equalities).
    Conversely, every two polygons of the same area have a dissection ---
    a result from over two centuries ago
    \cite{Lowry-1814,Wallace-1831,Bolyai-1832,Gerwien-1833}.
    The number of pieces cannot be bounded as a function of the number
    of vertices (consider, for example, dissecting
    an $N \times N$ square into an $N^2 \times 1$ rectangle),
    but there is a pseudopolynomial upper bound \cite{HingedDissection}.

    Given two specific polygons, can we find the dissection between them
    with the \emph{fewest possible pieces}?
    In general, this minimization problem is NP-hard, even to approximate within
    a factor of $1 + 1/1080 - \epsilon$ \cite{DissectionHard}.
    Worse, the existence of a $k$-piece dissection is not known to be decidable,
    even for $k=2$ pieces \cite{DissectionHard}.
    A core difficulty is that the number of cut segments has no obvious upper bound,
    even for 2-piece dissection.
    The special case where the $k=2$ pieces must be mirror congruent to each
    other has a polynomial-time decision algorithm
    \cite{El-Khechen-Iacono-Fevens-2008},
    even though the number of cuts cannot be bounded as a function of~$n$
    \cite{Rote-1997}.


    Instances of the dissection problem have captivated puzzle creators
    and solvers for centuries;
    see \cite{Frederickson-1997,Gardner-1969-dissection} for much history.
    One pioneer was English puzzlist Henry Ernest Dudeney
    \cite{Gardner-1961-dudeney}, who
    published many dissection puzzles in newspapers and magazines
    in the late 19th and early 20th centuries
    \cite{Dudeney-1908-canterbury,Dudeney-1917-amusements}.
    Over the years, geometric puzzlers continually improved the records
    for the fewest-piece dissections between various pairs of polygons.
    Harry Lindgren (an Australian patent reviewer) collected many of these
    results and improved most of them in his book \cite{Lindgren-1972}.
    He wrote:
    \begin{quote}
      ``In a few cases (very, very few)
      it could perhaps be rigorously proved that the minimum number
      has been attained, and in a few more one can feel morally
      certain; in all the rest it is possible that you may find a dissection
      that is better than those already known''
      \cite[p.~1]{Lindgren-1972}
    \end{quote}

    Some past work determines the asymptotic growth of the number of pieces
    for certain infinite families.  Cohn \cite{Cohn-1975} showed that a
    triangle of diameter $d$ needs $\Theta(d)$ pieces to dissect into a unit square,
    where the constant in the $\Theta$ is between $0.7415$ and $1$.
    (The obvious diameter lower bound is $d/\sqrt 2 \approx 0.7071 \, d$.)
    Kranakis, Krizanc, and Urrutia \cite{Kranakis-Krizanc-Urrutia-2000}
    showed that the regular $n$-gon needs $\Theta(n)$ pieces to dissect into a
    square, where the constant in the $\Theta$ is between $1/4$ and $1/2$.

    In this paper, we give the first nontrivial proof of exact optimality of
    the number of pieces in a dissection,
    by proving optimality of the famous four-piece dissection
    of an equilateral triangle to a square shown in Figure~\ref{fig:dudeney}.
    Dudeney \cite{dudeney1902} posed this dissection as a puzzle on April 6, 1902,
    without making it especially clear whether he had a solution.
    In the next issue of his column (April 20), he described an easy five-piece
    solution, wrote that Mr.\ C. W. McElroy of Manchester had found a four-piece
    solution, and gave readers another two weeks to try to find it.
    No one did, leading Dudeney to conclude that ``the puzzle may be regarded as
    a decidedly hard nut'' in the next issue where he gave the four-piece
    solution (May 4).  It remains unclear whether this solution was
    originally invented by Dudeney or McElroy
    \cite{Frederickson-1997,Frederickson-2002}.
    The puzzle and solution appeared later in Dudeney's book as
    ``The Haberdasher's Puzzle'' \cite[Puzzle 26]{Dudeney-1908-canterbury};
    Gardner \cite{Gardner-1961-dudeney}
    called it ``Dudeney's best-known geometrical discovery''.

    \begin{figure}[t]\centering
        \includegraphics[width=0.5\textwidth]{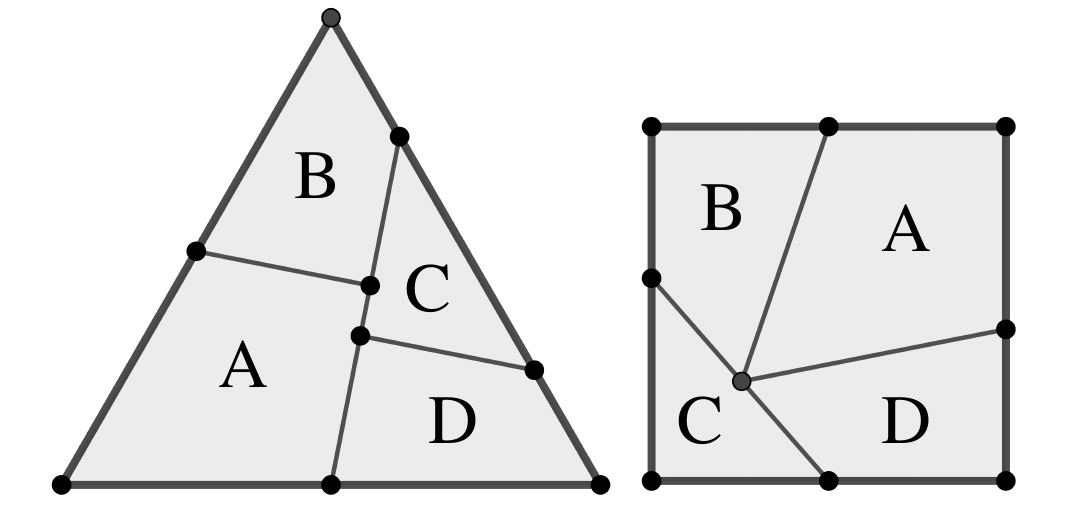}
        \caption{The four-piece dissection between an equilateral triangle and a square \cite{dudeney1902}.}
        \label{fig:dudeney}
    \end{figure}
        
    Since this discovery over 120 years ago, it has remained open whether this
    dissection is optimal, or whether the record will someday be improved upon
    (as many other records have)
    \cite{dudeney1902,HingedDissection,DissectionHard}.
    In this paper, we finally settle this problem:

    \begin{theorem} \label{thm:main}
        There is no dissection with three or fewer polygonal pieces between a square and an equilateral triangle, when we forbid flipping pieces.
    \end{theorem}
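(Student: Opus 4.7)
The plan is to proceed by contradiction: assume a dissection into $k\in\{1,2,3\}$ pieces exists and expose a clash between the rigid intrinsic geometry of each piece and the differing combinatorics of the square $S$ and the equilateral triangle $T$. Each piece $P_i$ is a rigid polygon with a fixed cyclic sequence of edge lengths and internal angles; only the edge \emph{labels}---which edges lie on the outer boundary and which are matched with a neighboring piece---differ between the two assemblies. Two global invariants control every valid assembly: the external edges of all pieces partition the outer boundary (total length $4s$ on the $S$-side and $6s\cdot 3^{-1/4}$ on the $T$-side), and at every outer corner the meeting piece-angles sum to $90^\circ$ (square) or $60^\circ$ (triangle). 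The case $k=1$ is immediate. For $k=2$ the single shared cut arc forces each piece to have a connected external arc in each assembly; a vertex interior to such an arc must lie at an outer corner, hence carry internal piece-angle $90^\circ$ in $S$ and $60^\circ$ in $T$, and a short counting argument then shows the seven outer corners of $\partial S$ and $\partial T$ cannot all be accommodated at the small number of available ``transition'' points on the two pieces, giving a contradiction.

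For $k=3$ I would enumerate the combinatorial types of the interior cut graph in each assembly (essentially two: two disjoint boundary-to-boundary arcs, or a single tree with one trivalent interior vertex, both decorated with the possibility of T-junctions where a piece vertex rests on another piece's edge interior). For each pairing of an $S$-type with a $T$-type and each identification of the three pieces across the two assemblies, the data of ``which edge of which piece is matched with which, in which assembly'' defines a discrete bipartite-style correspondence between the $S$-configuration and the $T$-configuration. This is the graph structure promised in the introduction, and is analogous to the matchings used in common-unfolding arguments for polyhedra. On this graph, the length and angle invariants above translate into a finite system of linear length-equations and corner-angle equalities, and the task is to refute every case by exhibiting an unsatisfiable subsystem---leaning on the basic $90^\circ\neq 60^\circ$ incompatibility already used for $k=2$ together with irrationality-flavored relations forced by the equal-area condition $s^2=t^2\sqrt 3/4$.

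The principal obstacle I anticipate is the bookkeeping for $k=3$: T-junctions, spurious flat ($180^\circ$) piece vertices lying on outer sides, and piece edges collinear with outer sides can all multiply the number of subcases considerably, and the identification of the three pieces between the two assemblies further branches the analysis. The argument will likely hinge on a handful of early structural reductions---of the sort used in the $k=2$ step---that collapse most apparent cases into a small set of normal forms, after which each surviving configuration can be killed by a direct length/angle computation. Designing those reductions so that the enumeration stays tractable, and extracting the graph invariant cleanly enough to drive them, is the part I expect to be hardest.
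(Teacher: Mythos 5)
Your outline points in the same general direction as the paper's argument---enumerate the combinatorial types of the cut graphs in the two assemblies, set up a bipartite correspondence between edges and vertices of the pieces as they appear in $S$ and in $T$, and refute each case with length and angle relations plus the irrationality of the side ratio. But as written it is a research plan, not a proof, and the pieces that are missing are exactly the ones that carry the mathematical weight. First, you have no analogue of the paper's opening geometric lemmas: that the vertices of $T$ cannot be cut and each piece contains exactly one of them (because two vertices of $T$ are at distance $2$, exceeding the diagonal $\sqrt{2\sqrt 3}$ of $S$), and that no piece contains two opposite vertices of $S$. These diameter bounds are what make the enumeration finite and small; without them your claim that there are ``essentially two'' interior cut-graph types is a serious undercount. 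The paper needs $5$ equivalence classes on the triangle side and $38$ on the square side, and then a counting invariant (the signed count of each interior angle $\theta$ versus $2\pi-\theta$ over all piece vertices, which must agree between the two assemblies) just to reduce the $5\times 38$ product to roughly thirty feasible pairs. Incidentally, these same diameter bounds dispose of $k=2$ in one line, whereas your corner-angle counting sketch for $k=2$ is not obviously complete as stated.

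Second, ``refute every case by exhibiting an unsatisfiable subsystem'' is the step you yourself flag as the hard part, and it genuinely is: the paper needs three distinct contradiction mechanisms, each with its own structural lemma about the matching graphs. Roughly: (i) two boundary edges sharing an endpoint cannot be the two ends of a matching path unless the corresponding vertex component contains a cycle; (ii) along a ``well-behaved'' path the edge lengths satisfy an alternating-sum identity, which in many cases forces a rational relation between $\sigma$ and $\tau$, contradicting $\sigma/\tau = 3^{1/4}/2$; and (iii) an uncut edge of $S$ flanked by two uncut vertices must be matched to a T-cut edge or the middle third of a trisected side, and one counts that not enough such targets exist. Which mechanism applies depends delicately on where the degree-$3$ vertex of the cut tree sits, how many cut endpoints land on each side of each polygon, and whether T-junctions occur---and the final cases still require a further three-level case split. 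So while your approach is not wrong, the proposal does not yet contain the ideas that make it work: the bounding lemmas, the complete enumeration, and the specific refutation lemmas all remain to be supplied.
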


    Figure~\ref{fig:dudeney} also does not require the pieces to be
    flipped, but this is not necessarily a requirement in the original puzzle.
    When first teasing ``the correct solution'', Dudeney wrote,
    ``surprising as it may seem, it is not necessary to turn over any piece.''
    So we leave an intriguing puzzle for future work:
    is there a 3-piece dissection if we allow flipping pieces?
    We also do not know whether curved pieces might help, though we suspect
    Theorem~\ref{thm:main} extends to both of these settings.

\subsection{Overview}

    The structure of the proof is as follows.
    First, it is relatively easy to show that a two-piece dissection is impossible (see Lemma~\ref{lem:geom-T-vertex}).
    Thus, we focus on proving the impossibility of a three-piece dissection.
    To do so, we classify the finitely many possible topologies
    for cutting each polygon
    that could potentially result in a three-piece dissection.
    Next, we use fundamental properties of dissections to narrow down
    the feasible combinations of these cutting topologies for the two shapes.
    Finally, we show that all remaining combinations are infeasible
    using the new concepts of ``matching diagrams'',
    which handle the potentially unbounded number of cut segments.

    This paper is organized as follows.
    We start in Section~\ref{sec:preliminaries} with basic definitions of dissections, specifically the graph of cuts within each shape, and a categorization of vertices in these graphs into six types.
    Then, in Section~\ref{sec:matching-diagram}, we introduce our primary tool for analyzing dissections --- \defn{vertex and edge matching diagrams} --- and prove several general necessary properties for dissections to work, which can be applied to other dissection problems as well.
    Next, in Section~\ref{sec:overview-ex}, we give a technical overview of our proof, and illustrate by describing one key case in detail; this section gives a flavor of our techniques without getting bogged down in the many cases necessary to complete the proof (but from a technical perspective, it can be skipped).
    Finally, in Section~\ref{sec:main}, we give the full proof, which applies the matching diagrams from Section~\ref{sec:matching-diagram} to the specific case of the equilateral triangle and the square.

\section{Preliminaries}
\label{sec:preliminaries}
    We define a \defn{polygon} to be a connected region in the plane of finite area bounded by finitely many line segments.
    (This definition does not require polygons to be simple,
    although we will show in Lemma~\ref{lem:geom-simple} that our main proof
    can consider just simple polygons.)
    We refer to a polygon's line segments as \defn{sides} and the intersections of these sides as \defn{corners}.%
    \footnote{We use these terms (instead of more common terms such as
      ``edges'' and ``vertices'') to provide unambiguous terminology
      for edge- and node-like features of various parallel structures:
      polygons, cut graphs, and matching graphs.}
    The \defn{boundary} of a polygon consists of the corners and
    points along the sides,
    while the \defn{interior} is all other points within the region.
    Let $|s|$ denote the Euclidean length of a side~$s$,
    and let $\angle(x)$ denote the \defn{interior angle}
    of the region at corner~$x$.

    A \defn{$k$-piece dissection} of a pair of polygons $(P, P^{\prime})$
    consists of $k$ polygons $P_1, P_2, \ldots, P_k$ and two sets of corresponding congruence transformations ${\lambda_1, \lambda_2, \ldots, \lambda_k}$ and ${\lambda^{\prime}_1, \lambda^{\prime}_2, \ldots, \lambda^{\prime}_k}$ such that $\bigsqcup_{i} \lambda_i(P_i) = P$ and $\bigsqcup_{i} \lambda^{\prime}_i(P_i) = P^{\prime}$.%
    \footnote{The notation $\sqcup$ denotes interior-disjoint set union,
      i.e., we allow the polygons to overlap on shared boundaries.}
    Figure~\ref{fig:dudeney} is an example of a 4-piece dissection.
    We refer to $P$ and $P^{\prime}$ as \defn{target shapes}, and $P_1, P_2, \ldots, P_k$ as \defn{pieces}. 
    Naturally, the areas of the target shapes must be equal to the total area of the pieces.

    A dissection of $(P, P^{\prime})$ divides each target shape $X \in \{P, P^{\prime}\}$ into pieces by cut lines. 
    We define a finite geometric graph called the \defn{cut graph} $G^X$, which represents the cut lines and the boundary of~$X$.
    Specifically, $G^X$ has \defn{faces} that correspond one-to-one with the pieces, \defn{vertices} $V(G^X)$ which consist of the corners of $X$ and the points along the cut lines having at least one surrounding angle not equal to $\pi$, and \defn{edges} $E(G^X)$ which consist of the cut and boundary lines connecting these vertices.
    For example, Figure~\ref{fig:dudeney} highlights the vertices and edges of $G^P$ and $G^{P^{\prime}}$ for Dudeney’s four-piece dissection.
    The edges $E(G^X)$ either originate from cut lines, referred to as \defn{internal edges}, or from the sides of~$X$, referred to as \defn{boundary edges}.
    The vertices $V(G^X)$ are intersections between at least two edges;
    a vertex met only by internal edges is an \defn{internal vertex}, while
    a vertex met by at least one boundary edge is a \defn{boundary vertex}.
    Based on the elements surrounding each vertex of $G^X$, we classify boundary vertices into Types 1--2 and internal vertices into Types 3--6 as follows (refer to Figure~\ref{fig:types}):
    \begin{itemize}
        \item \textbf{Type 1: Corner vertex.} A boundary vertex corresponding to a corner of the target shape $X$.
        \item \textbf{Type 2: Side vertex.} A boundary vertex corresponding to a point along a side of the target shape~$X$.
        \item \textbf{Type 3: Paired vertex.} An internal vertex where exactly two piece corners meet.
        \item \textbf{Type 4: Convex vertex.} An internal vertex where three or more piece corners meet, and all interior angles are convex.
        \item \textbf{Type 5: Reflex vertex.} An internal vertex where three or more piece corners meet and only one of them is reflex.
        \item \textbf{Type 6: Flat vertex.} An internal vertex where at least two piece corners and one piece side meet.
    \end{itemize}
    \begin{figure}[htbp]
        \centering
        \includegraphics[width=1\textwidth]{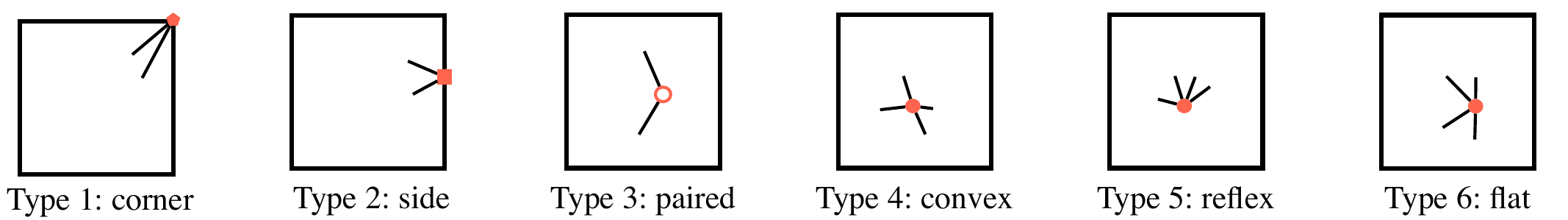}
        \caption{The types of vertices in $G^X$.}
        \label{fig:types}
    \end{figure}
    This distinction will be useful in Section~\ref{sec:enum} when counting the number of angles appearing in the cut graph.

    We define a \defn{subdivision} of the cut graph $G^X$ to be a graph obtained by replacing internal edges of $G^X$ with paths consisting of paired (Type 3) vertices.  
    We define an equivalence relation $\sim$ where $G^X_0 \sim G^X_1$ if
    there exist subdivisions $H^X_0$ and $H^X_1$ of the graphs $G^X_0$ and $G^X_1$ respectively, and a bijection $\phi: V(H^X_0) \rightarrow V(H^X_1)$ such that $\phi$ is a graph isomorphism that disregards geometry while preserving vertex types.

\section{Matching Diagram}\label{sec:matching-diagram}
    When a set of pieces is assembled to form a target shape, each side of a piece corresponds to an edge of the cut graph, and each corner corresponds to a vertex of the cut graph. 
    Therefore, when a pair of target shapes has a dissection, there are two corresponding relationships: one between the edges and another between the vertices of the cut graph.
    In this section, we formalize these relationships and describe these properties.  
\subsection{Edge Matching Diagram}\label{sec:EMD}
    When analyzing the relationship between the sides of pieces and the edges of the cut graph, the fundamental principle is that boundary edges correspond to a single side of a piece, while interior edges correspond to two sides from different pieces. 
    However, this correspondence becomes more complex in the vicinity of flat vertices.
    For instance, when there is a flat vertex on the side of a piece, two sides of the pieces correspond to this side (see the left of Figure~\ref{fig:around-flat}).
    Moreover, when multiple flat vertices are adjacent on the graph, more complex correspondences arise (see the middle and right of Figure~\ref{fig:around-flat}).

    \begin{figure}[htbp]\centering
        \includegraphics[width=1\textwidth]{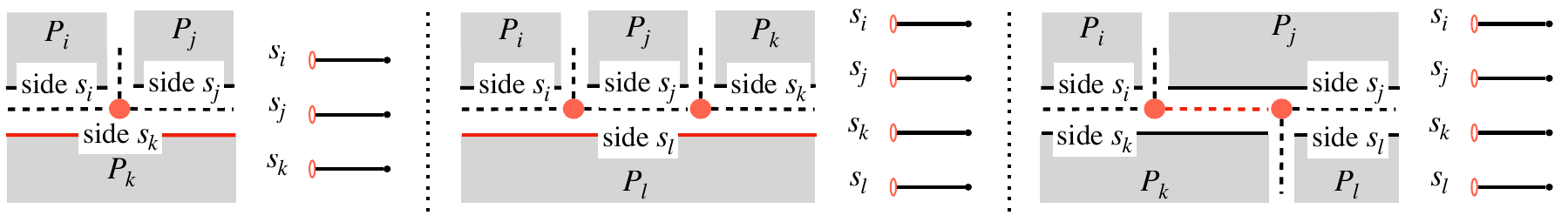}
        \caption{Examples of flat vertices affecting the edge correspondence, and the corresponding nodes in the edge matching diagram.}
        \label{fig:around-flat}
    \end{figure}
    
    To systematically represent these relationships, we introduce a bipartite diagram. 
    In this diagram, each side of a piece is treated as a link, and the edges of the cut graph are treated as nodes. 
    The endpoints of each link are the nodes representing the edges, which correspond to the sides indicated by the links.
    This construction ensures that an edge corresponding one-to-one to two sides becomes a degree-2 node, while all other edges become degree-1 nodes. 
    In cases where a side crosses multiple edges due to flat vertices, we introduce virtual nodes (see the red lines in the left and the middle of Figure~\ref{fig:around-flat}). 
    In contrast, edges of the cut graph that do not correspond to any piece side are removed from the node set (see the red dotted line in the right of Figure~\ref{fig:around-flat}).
    We now formalize this construction as follows:
    \begin{definition}
        We define an \defn{edge matching diagram} $\ED$ for a pair of cut graphs $G^P$ and $G^{P^{\prime}}$ (see Figure~\ref{fig:EMG}) as follows:
        \begin{itemize}
            \item For each $X \in \{P, P^{\prime}\}$, define the node set $N^{X}(\ED)$ as follows:
            \begin{itemize}
                \item Initialize $N^{X}(\ED)$ as $E(G^X)$.
                \item For an edge $x \in N^{X}(\ED)$, if both endpoints of $x$ in the cut graph are flat vertices and an angle of $\pi$ appears on opposite sides of $x$, then remove $x$ from $N^{X}(\ED)$.
                \item For each side of a piece $x$, if there is a flat vertex on $x$  when forming $X$, add a new corresponding node to $N^{X}(\ED)$.
            \end{itemize}
            \item Let $L(\ED)$ be the set of links between $N^{P}(\ED)$ and $N^{P^{\prime}}(\ED)$, connecting a link for each side $e \in E(P_1) \cup E(P_2) \cup \cdots \cup E(P_k)$ according to the correspondence between links representing the sides of pieces and nodes representing the edges of the graph.
        \end{itemize}
        \begin{figure}[htbp]\centering
            \includegraphics[width=0.8\textwidth]{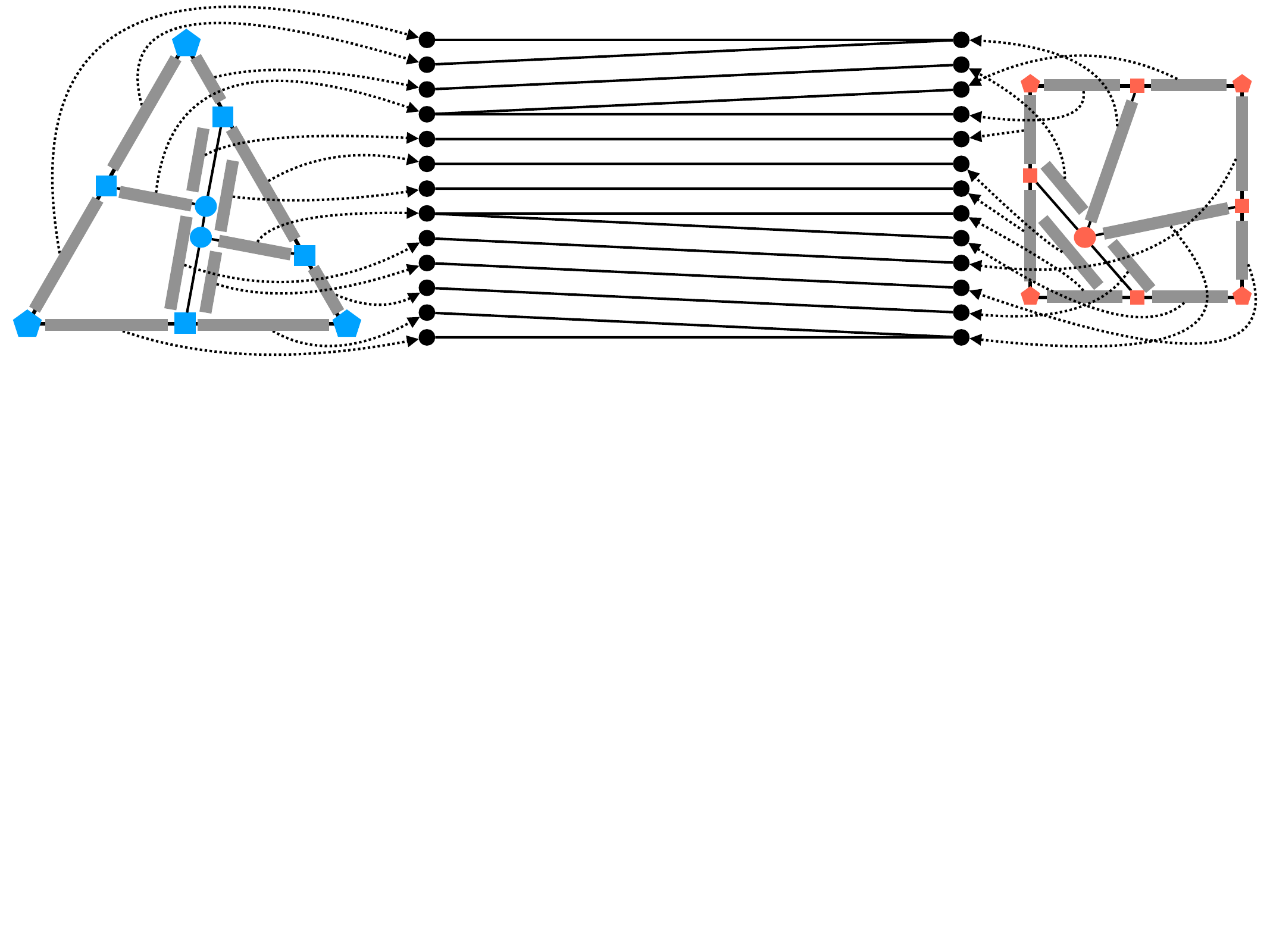}
            \caption{The edge matching diagram of Dudeney's dissection from Figure~\ref{fig:dudeney}.}
            \label{fig:EMG}
        \end{figure}
    \end{definition}
    The added nodes corresponding to a flat vertex are called \defn{flat nodes}, nodes corresponding to boundary edges are called \defn{boundary nodes}, and all other nodes are called \defn{internal nodes}. 
    We use terms such as degree, leaf, and path for the diagram in the same way as they are used in standard graph theory.

    By construction, the degree of each node in $\ED$ is either $1$ or $2$.
    Moreover, since two sides sharing a degree-2 node correspond to the same edge of a cut graph, their lengths are equal.
    Consequently, a path can be obtained by iteratively following degree-2 nodes starting from a leaf of $\ED$; this path satisfies the following property:

    \begin{observation}\label{obs:EG-length}
        For any path in $\ED$, all piece sides corresponding to the nodes along the path have equal lengths.
    \end{observation}
    
\subsection{Vertex Matching Diagram}\label{sec:VMD}
    Next, we consider the relationships between the corners of the pieces and the vertices of the cut graph. 
    This relationship can be considered more simply compared to the relationships between edges discussed in Section~\ref{sec:EMD}.
    A corner of a piece corresponds to a vertex in the cut graph $G^X$ of each target shape $X \in \{P, P^{\prime}\}$. 
    Conversely, when looking at the vertices of $G^X$, one or multiple piece corners may gather.
    To capture this relationship, we introduce the following definition:
    \begin{definition}
        We define a \defn{vertex matching diagram} $\VD$ for $G^P$ and $G^{P^{\prime}}$ (see Figure~\ref{fig:VMG}):
        \begin{itemize}
            \item For each $X \in \{P, P^{\prime}\}$, define the node set $N^{X}(\VD)$ as $V(G^X)$.
            \item Let $L(\VD)$ be the set of links between $N^{P}(\VD)$ and $N^{P^{\prime}}(\VD)$ by connecting a link for each corner $x \in V(P_1) \cup V(P_2) \cup \cdots \cup V(P_k)$ according to the correspondence between links representing the corners of pieces and nodes representing the vertices of the graph.
        \end{itemize}
        \begin{figure}[htbp]\centering
            \includegraphics[width=0.8\textwidth]{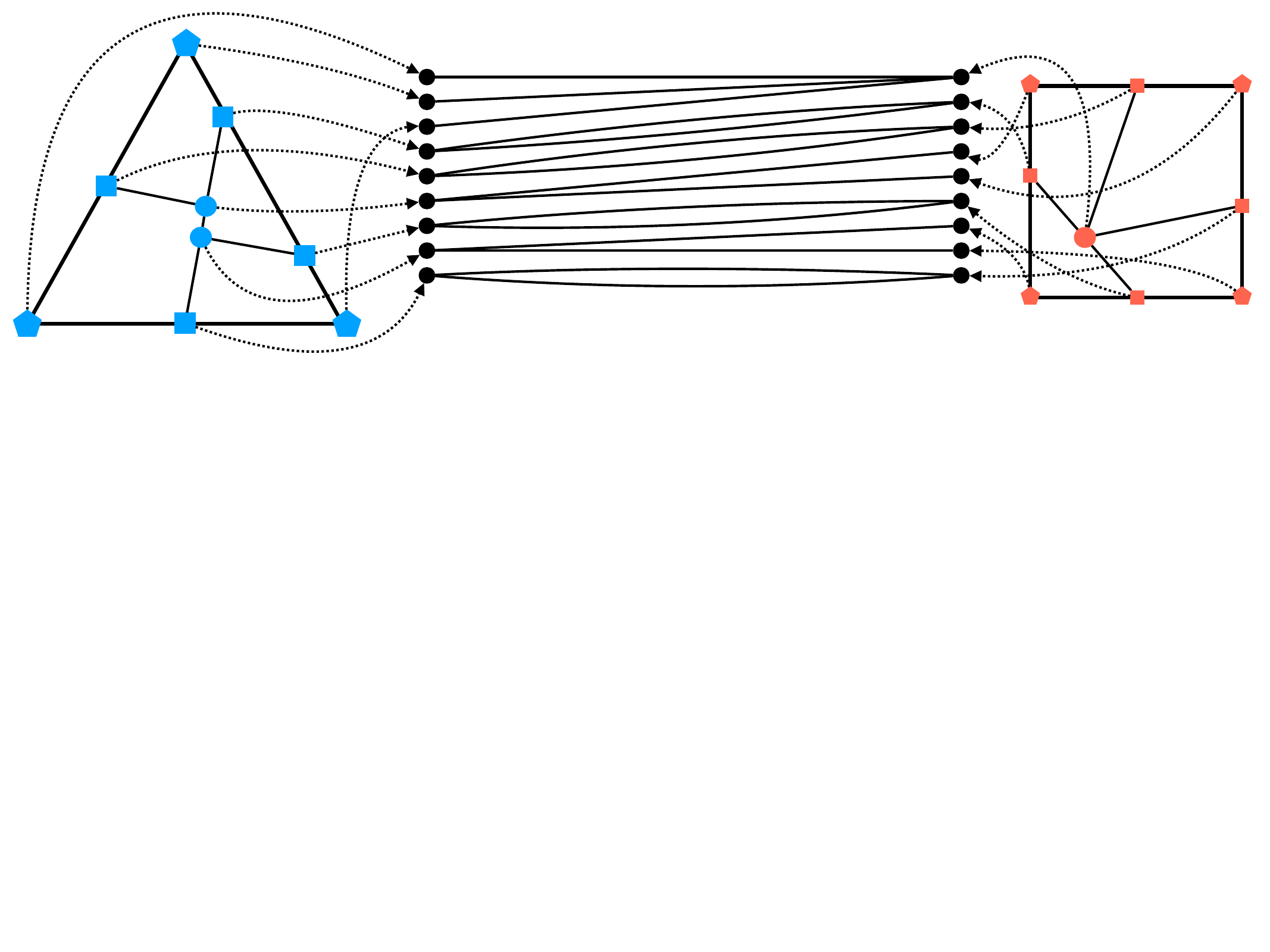}
            \caption{The vertex matching diagram of Dudeney's dissection from Figure~\ref{fig:dudeney}.}
            \label{fig:VMG}
        \end{figure}
    \end{definition}
    Since each node of $\VD$ corresponds to a vertex of the cut graph, we define the terms \defn{corner}, \defn{side}, \defn{paired}, \defn{convex}, \defn{reflex}, and \defn{flat} directly for the nodes as well.

    We define a weight $\omega(x)$ for each node $x$ of $\VD$ by the sum of the interior angles of the corners that gather at $v$. 
    The value of $\omega(x)$ is $2\pi$ if $x$ is a paired, convex, or reflex vertex.
    In the case that $x$ is a flat or side node, $\omega(x)$ is $\pi$.
    In the case that $x$ is a corner node, $\omega(x)$ is the interior angle of the corresponding corner of the target shape.
    
    We can also observe the following by the fact that a paired vertex, which is shared by two corners, has a weight of $2\pi$:
    \begin{observation}\label{obs:VG-angle}
        For any path in $\VD$ that consists of paired vertices, $\angle(x) + \angle(x^{\prime}) = 2\pi$ or $\angle(x) = \angle(x^{\prime})$ holds depending on whether the length of the path is odd or even, where $x$ and $x^{\prime}$ are the corners corresponding to the endpoint nodes of the path.
    \end{observation}
        \begin{figure}[htbp]\centering
            \includegraphics[width=0.75\textwidth]{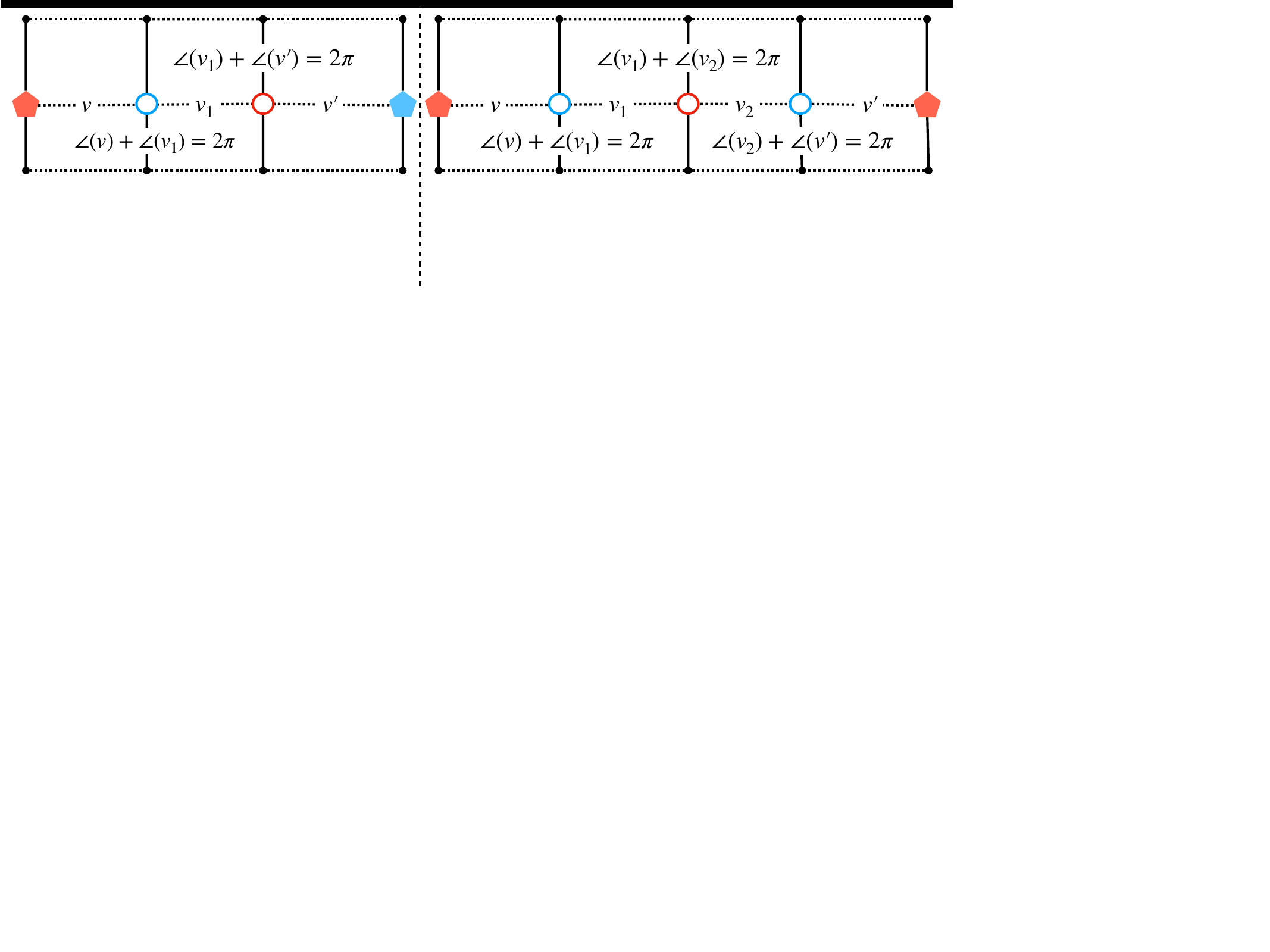}
            \caption{Relationship between the interior angles in a path of $\VD$ composed of paired vertices. The red and blue marks represent the nodes of $\VD$, and the dotted lines connecting them represent links, that is, corners of the pieces.}
        \label{fig:VG-angle}
        \end{figure}
    
    Taking the union of the piece vertices gathered at each node sets $N^{P}(\VD)$ and $N^{P^{\prime}}(\VD)$, we can count the piece vertex set in two different ways, revealing the following fact:
    \begin{observation}\label{obs:VG-anglesum}
        Any connected component $C$ of $\VD$ satisfies
        $$\sum_{x\in N(C) \cap N^{P}(\VD)}\omega(x) = \sum_{x\in N(C) \cap N^{P^{\prime}}(\VD)}\omega(x).$$
    \end{observation}

\subsection{Relationship between Vertex and Edge Matching Diagrams}
    This section describes the relationship between paths in the edge and vertex matching diagram, which are generated by sides and their endpoint corners. 
    Formally, when a side of a piece is placed along the horizontal axis with the inside of the piece facing downward, the corner on the right is referred to as \defn{clockwise next}, while the one on the left is referred to as \defn{counterclockwise next}.
    The same definitions apply to the corners as viewed from a side.

    The following lemma describes the relationship between a path in $\ED$, starting from a side, and the $\VD$ path formed by the corners that come next along the boundary of the target shape.
    \begin{lemma}\label{lem:EGtoVG}
        Let $e$ and $e^{\prime}$ be links in $\ED$, that is, sides of pieces, and assume that they are connected by a path in $\ED$. 
        Let $v_1$ and $v_2$ be the clockwise and counterclockwise next corners of $e$, respectively. 
        Similarly, define $v^{\prime}_1$ and $v^{\prime}_2$ for $e^{\prime}$. 
        Here, $v_1$ is connected to either $v^{\prime}_1$ or $v^{\prime}_2$ by a path in $\VD$, and $v_2$ is connected to the other.
    \end{lemma}
    \begin{proof}
        Based on the assumption that $e$ and $e^{\prime}$ are connected by a path in $\ED$, each node along this path corresponds to a one-to-one pairing of two sides of the pieces. 
        Since these two sides form an a single edge of the cut graph, the vertices at both endpoints of this edge correspond to the corners of the respective sides. 
        Therefore, there exists a pair of paths in $\VD$ that shares a node between the pairs of corners to which the two next sides correspond.
    \end{proof}
    Note that which pair corresponds depends on whether the piece is flipped or not.
    When any piece is not flipped, it is possible to precisely describe which pair corresponds. 
    This will be discussed in Section~\ref{sec:anflip}.

    Next, we consider the relationship between a $\VD$ path and the $\ED$ paths formed by the next sides of the corner in the $\VD$ path. 
    Before describing this relationship, we define some terminology.
    Let us consider a paired vertex $x$ of $G^X$ such that both adjacent vertices are non-flat.
    In this case, pairs of the next sides of the corner corresponding to $x$ create degree-2 nodes in $\ED$ (see the left of Figure~\ref{fig:along-ED}). 
    On the other hand, in other cases—such as when $x$ is a corner vertex, a side vertex, a flat vertex, or a paired vertex adjacent to a flat vertex—one pair of sides does not correspond to each other and remain as degree-1 nodes, while the other pair corresponds to a degree-2 node (see the middle of Figure~\ref{fig:along-ED}). 
    In particular, when passing through a side vertex that bisects a side $e$ of a piece, the sum of the lengths of the disconnected paths is equal to the length of $e$ (see the right of Figure~\ref{fig:along-ED}).
    
    \begin{figure}[htbp]\centering
        \includegraphics[width=1\textwidth]{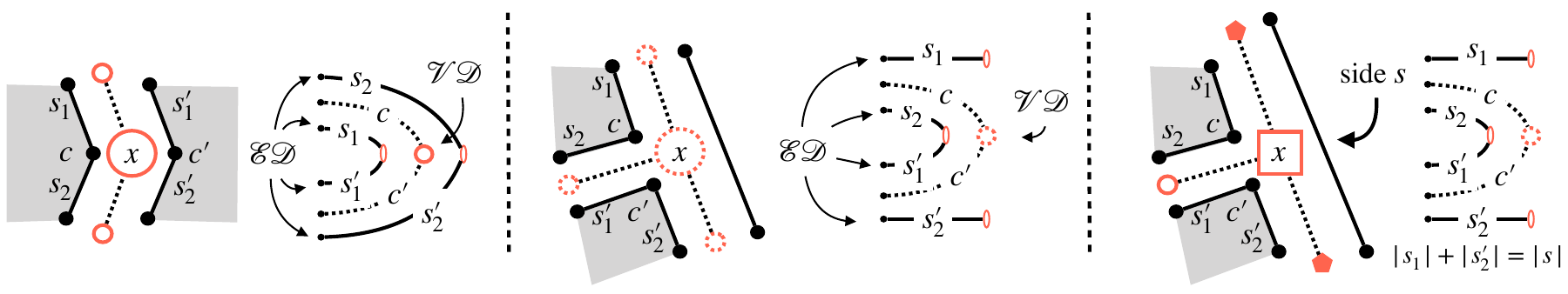}
        \caption{Examples illustrating the relationship between $\VD$ and $\ED$ paths.}
        \label{fig:along-ED}
    \end{figure}
    Based on these observations, we introduce the following definition:
    \begin{definition}
        We introduce the following notations (see Figure~\ref{fig:divide-in-two-internal}):
        \begin{itemize}
            \item $N^{X}_{\mathit{dit}}(\VD) \subset N^{X}(\VD)$ denotes the union of side vertices whose adjacent vertices are both corner vertices and of flat vertices whose adjacent vertices are both paired vertices. 
            These are referred to as \defn{divide-in-two vertices}.
            \item $N^{X}_{\mathit{bur}}(\VD) \subset N^{X}(\VD)$ denotes the set of paired vertices such that both adjacent vertices are non-flat.
            These are referred to as \defn{buried vertices}.
        \end{itemize}
        Then, a path $P = (p_1, \ldots, p_k)$ of $\VD$ is \defn{well-behaved} if $p_2, \ldots, p_{k-1} \in N^{X}_{\mathit{dit}}(\VD) \cup N^{X}_{\mathit{bur}}(\VD)$.
        \begin{figure}[htbp]\centering
            \includegraphics[width=0.2\textwidth]{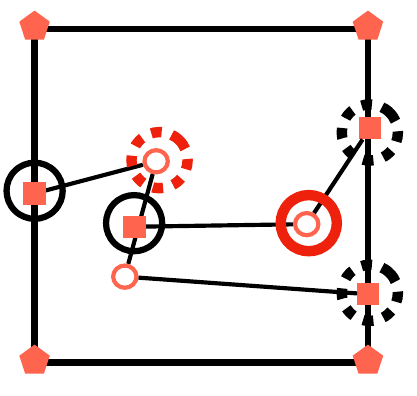}
            \caption{The points marked with a solid black circle are in $N^{X}_{\mathit{dit}}(\VD)$, while those with a dotted black circle are not. 
            Similarly, the point marked with a solid red circle is in $N^{X}_{\mathit{bur}}(\VD)$, while that with a dotted red circle is not.}
            \label{fig:divide-in-two-internal}
        \end{figure}
    \end{definition}
    
    Using the above definition, the relationship between a $\VD$ path and its adjacent $\ED$ paths can be expressed by the following lemma:
    \begin{lemma}\label{lem:alternate}
        Let $P$ be a well-behaved path in $\VD$, $v$ be an end node of $P$, and $e$ be the next side of $v$. 
        We define an \defn{along sequence} $M_1, \ldots, M_k$ of paths in $\ED$ as follows:
        \begin{itemize}
            \item Each $M_j$ is a path in $\ED$, and the end node of $M_1$ is $e$.
            \item A pair of adjacent edges—one ending $M_j$ and the other starting $M_{j+1}$—share a point in $N^{X}_{\mathit{dit}}(\VD)$ on a side $L_i$ as one of their end nodes.
        \end{itemize}
        Then, the alternating sum of $|e|, |L_1|, |L_2|, \ldots, |e^{\prime}|$ is zero where $e^{\prime}$ is the end node of $M_k$.
    \end{lemma}
        \begin{figure}[htbp]\centering
            \includegraphics[width=0.7\textwidth]{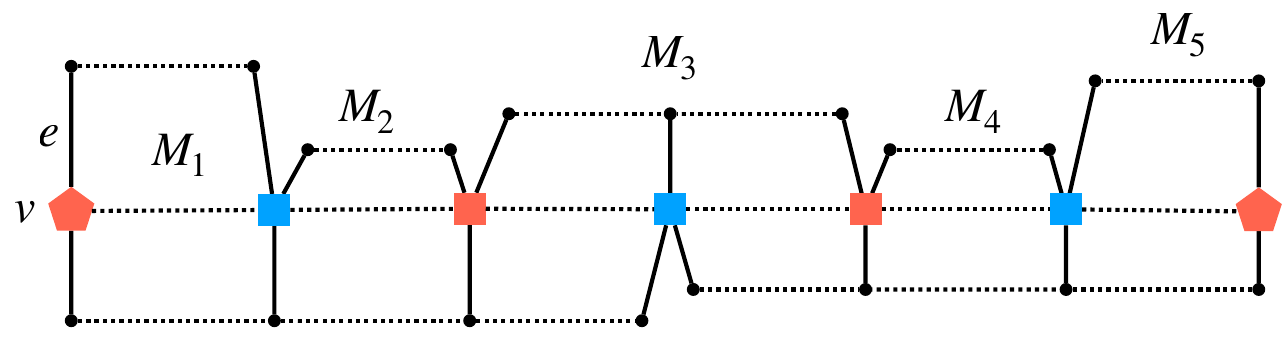}
            \caption{A well-behaved path of $\VD$ and its along sequence.}
            \label{fig:well-behaved}
        \end{figure}
    \begin{proof}
        The side $e$ and part of $L_1$ share a path in $\ED$, and according to Observation~\ref{obs:EG-length}, their lengths are equal. Moreover, the remaining part of $L_1$ and part of $L_2$ also share a path in $\ED$, and these lengths are equal as well. The equality of all these parts implies that the alternating sum of the lengths is zero.
    \end{proof}
\subsection{Lemmas when Flipping is not Allowed}\label{sec:anflip}
    When flipping of a piece is not allowed, the relationship between the connected components in Lemma~\ref{lem:EGtoVG} can be described based on the length of the paths as follows:
    \begin{lemma}\label{lem:side-component}
        We use the same notation as in Lemma~\ref{lem:EGtoVG}. 
        Let $C_1$ and $C_2$ be connected components of $\VD$, where $v_1 \in C_1$ and $v_2 \in C_2$.
        \begin{itemize}
            \item If the length of $P$ is odd, then $v^{\prime}_2 \in C_1$ and $v^{\prime}_1 \in C_2$.
            \item If the length of $P$ is even, then $v^{\prime}_1 \in C_1$ and $v^{\prime}_2 \in C_2$.
        \end{itemize}
        \begin{figure}[htbp]\centering
            \includegraphics[width=\textwidth]{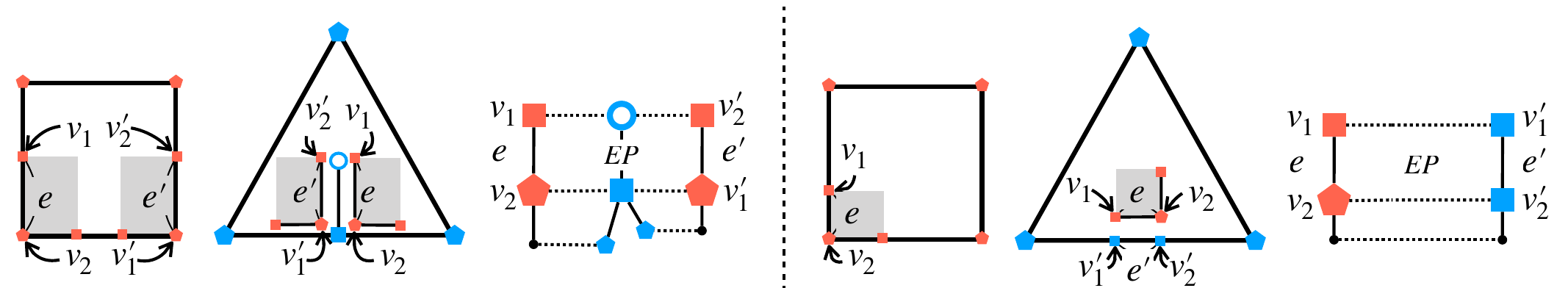}
            \caption{The two paths in $\VD$ formed by the endpoints of the path in $\ED$.}
            \label{fig:EG-length}
        \end{figure}
    \end{lemma}
    \begin{proof}
        This can be easily verified by observing that when two pieces share a side and come into contact, the clockwise-next corner on one piece corresponds to the counterclockwise-next corner on the other(Figure~\ref{fig:EG-length}).
    \end{proof}
    In addition, Lemma~\ref{lem:side-component} leads to the following lemma, which helps to simplify our proof:
    \begin{lemma}\label{lem:adjacent}
        Let $e$ and $e^{\prime}$ be boundary edges of $G^{X}$ such that they share an endpoint $v$. If $e$ and $e^{\prime}$ are connected by a path in $\ED$, the connected component of $\VD$ that includes $v$ contains a cycle.
    \end{lemma}
    \begin{proof}
        By Lemma~\ref{lem:side-component}, if $e$ and $e^{\prime}$ are connected by a path in $\ED$, we can trace a path in $\VD$ that begins and ends at $v$ (Figure~\ref{fig:adjacent}). This implies that the component containing $v$ includes a cycle.
    \end{proof}
    
        \begin{figure}[htbp]\centering
            \includegraphics[width=0.5\textwidth]{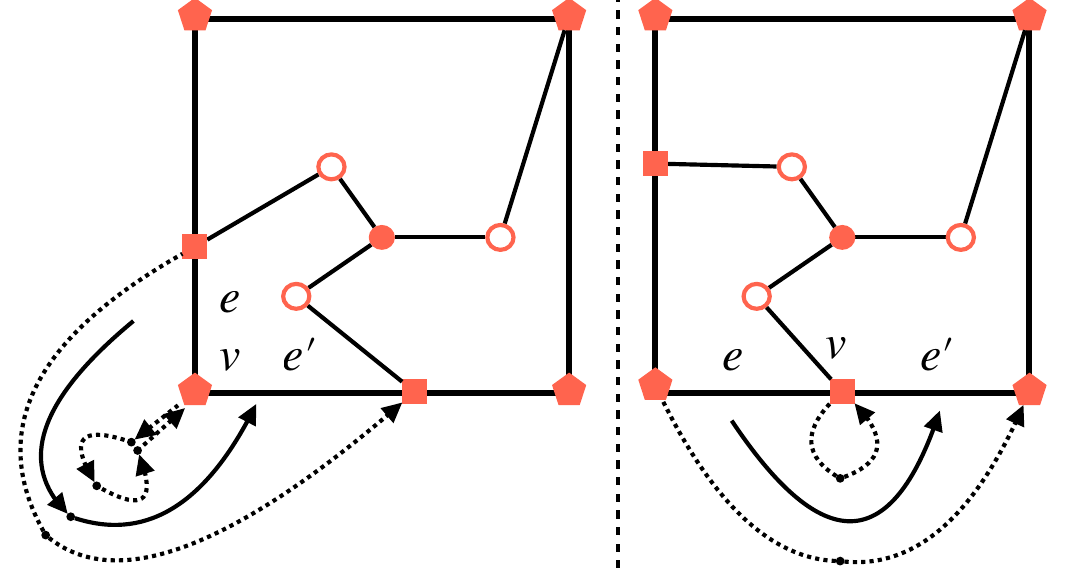}
            \caption{Illustration of cases where two edges on the boundary of $X$, either sharing a vertex of $X$ (left) or a point on an edge (right), become the endpoints of a path in $\ED$.}
            \label{fig:adjacent}
        \end{figure}
    
    In particular, we use these lemmas in combination as follows:
    \begin{lemma}\label{lem:useful}
        Let $C$ be a connected component of $\VD$ that contains a boundary node $v$, and suppose the following conditions hold:
        \begin{itemize}
            \item Both of the next sides $s$ and $s'$ of $v$ along the boundary correspond to monochromatic nodes in $\ED$.
            \item None of the next sides of any other node in $C$ correspond to monochromatic nodes in $\ED$.
        \end{itemize}
        Then $C$ forms a cycle.
    \end{lemma}
    
\section{Proof Overview and Example}\label{sec:overview-ex}
    In this section, we provide an overview of how to prove the impossibility.
    We let $P$ be a square $S$ and $P^{\prime}$ be an equilateral triangle $T$.
    We denote the side lengths of $S$ and $T$ by $\sigma$ and $\tau$, respectively, and set $\sigma = \sqrt{\sqrt{3}}$  and $\tau=2$ so that both areas are $\sqrt{3}$.
    
    First, in Section~\ref{sec:enum}, we consider the equivalence classes of the cut graph under the equivalence relation $\sim$. 
    Although this is essentially a brute-force exhaustive search, several tools are introduced to help to reduce the number of case distinctions.
    Next, in Section~\ref{sec:lemmas-tri-squ}, we perform a more detailed analysis of the properties that hold for matching diagrams when considering equilateral triangles and squares.  
    Finally, in Section~\ref{sec:individual}, using these tools, we complete the proof by deriving contradictions for each individual case.

    In the following, we present an example of the proof method by borrowing some tools in advance.
    We focus on the case where representative elements of equivalence classes $G^T$ and $G^S$ are the cut graphs shown on the left of Figure~\ref{fig:ex-howto}. 
    This pair of cut graphs corresponds to ${G^T_{\mathfrak{D}1}}$ and ${G^S_{\mathfrak{A}3}}$ in the later labeling. 
    the same proof also applies to ${G^T_{\mathfrak{D}1}}$ and ${G^S_{\mathfrak{D}7}}$.
    \begin{figure}[htbp]
        \centering
        \includegraphics[width=0.8\textwidth]{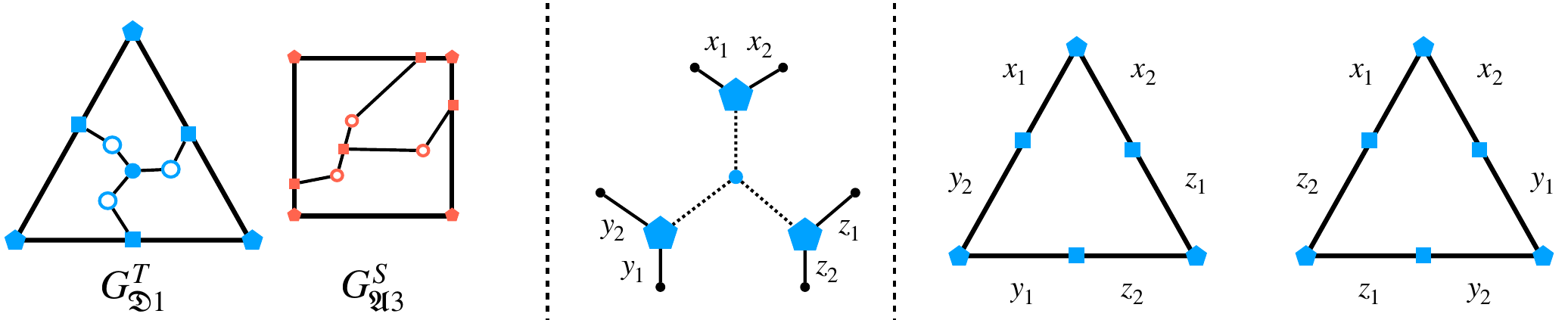}
        \caption{The pair of ${G^T_{\mathfrak{D}1}}$ and ${G^S_{\mathfrak{A}3}}$, the tree $Y$ contained in their $\VD$, and the possible combinations of side edge lengths of $T$.}
        \label{fig:ex-howto}
    \end{figure}
    \begin{lemma}
        The combination of ${G^T_{\mathfrak{D}1}}$ and ${G^S_{\mathfrak{A}3}}$ is infeasible.
    \end{lemma}
    \begin{proof}
        In this case, by Lemma~\ref{lem:struct_BandC}, which will be given later, $\VD$ contains, as a connected component, a tree $Y$ that connects each corner of $T$ and a degree-3 vertex $q$ by a well-behaved path.
        Let $v_x$, $v_y$, and $v_z$ be the three leaves of $Y$, with the adjacent sides of $v_x$ being $x_1$ and $x_2$, those of $v_y$ being $y_1$ and $y_2$, and those of $v_z$ being $z_1$ and $z_2$ (see the middle of Figure~\ref{fig:ex-howto}). 
        Here, the sides $x_1$, $x_2$, $y_1$, $y_2$, $z_1$, and $z_2$ can be grouped into three pairs, each pair forming a side of $T$. 
        Therefore, one of the following holds (see the right of Figure~\ref{fig:ex-howto}):
        \begin{itemize}
            \item \textbf{Equation 1:} $\|x_1\| + \|y_2\| = \|y_1\| + \|z_2\| = \|z_1\| + \|x_2\| = \tau$.
            \item \textbf{Equation 2:} $\|x_1\| + \|z_2\| = \|y_1\| + \|x_2\| = \|z_1\| + \|y_2\| = \tau$.
        \end{itemize}

        Let $W_x$, $W_y$, and $W_z$ be the paths in $\VD$ between each leaf and $q$.
        We consider the number of points from $N^{T}_{\mathit{dit}}(\VD)$ and $N^{S}_{\mathit{dit}}(\VD)$ included in each $W_-$. 
        Since the vertices $v_x$, $v_y$, $v_z$, and the convex vertex $q$ are all included in $G^T$, the weight of both endpoint nodes of each $W_-$ is at most $\pi$.
        By Observation~\ref{obs:VG-anglesum}, each $W_-$ contains at least one node of $N^{S}_{\mathit{dit}}(\VD)$.
        Moreover, since the weight of a divide-in-two vertex is $\pi$, the elements of $N^{T}_{\mathit{dit}}(\VD)$ appear alternately with those of $N^{S}_{\mathit{dit}}(\VD)$, and the number of $N^{T}_{\mathit{dit}}(\VD)$ contained is one less than the number of $N^{S}_{\mathit{dit}}(\VD)$.
        Therefore, since $\| N^{S}_{\mathit{dit}}(\VD) \| = 4$, the number of elements from $N^{S}_{\mathit{dit}}(\VD)$ in each path is $1$ or $2$, and the number from $N^{T}_{\mathit{dit}}(\VD)$ is $0$ or $1$.
        If $Y$ includes a point in $N^{T}_{\mathit{dit}}(\VD)$, it is a side vertex of $G^T$.

        If $Y$ includes exactly one side vertex $t$, then the only monochromatic sides connected to $Y$ are the two adjacent to $t$. 
        Therefore, the connected component including $t$ must contain a cycle by Lemma~\ref{lem:useful}, contradicting the fact that $Y$ is a tree.

        Next, we consider the case where $Y$ includes no side vertex of $T$. In this case, there are two possible structures for $Y$, as shown in Figure~\ref{fig:tree-B-1p}. 
        Depending on whether we are considering the left or right case in Figure~\ref{fig:tree-B-1p}, by Lemma~\ref{lem:alternate}, we have one of the following:
        \begin{itemize}
            \item \textbf{Equations A:} $\| x_1 \| + \| y_2 \| = \sigma$, \quad $\| y_1 \| + \| z_2 \| = \sigma$, \quad $\| z_1 \| + \| x_2 \| = \sigma$.
            \item \textbf{Equations B:} $\| x_1 \| = \| y_2 \|$, \quad $\| y_1 \| = \| z_2 \|$, \quad $\| z_1 \| + \| x_2 \| = \sigma$.
        \end{itemize}
        Here, even if $Y$ contains a flat node, by Lemma~\ref{lem:U-shape}, which will be shown later, the length of the side of the piece being divided is $\sigma$.
        
        From Equations A and Equation 1 or 2, we derive $\| x_1 \| + \| x_2 \| + \| y_1 \| + \| y_2 \| + \| z_1 \| + \| z_2 \| = 3\sigma = 3\tau,$
        which contradicts $\sigma \ne \tau$.
        From Equations B and Equation 1, we have $\| z_1 \| + \| x_2 \| = \sigma$ and $\| z_1 \| + \| x_2 \| = \tau$, which is a contradiction. 
        From Equations B and Equation 2, we have $\| z_1 \| + \| y_2 \| = \sigma$ and $\| z_1 \| + \| y_2 \| = \tau$, which is also a contradiction.
        \begin{figure}[htbp]
            \centering
            \includegraphics[width=0.5\textwidth]{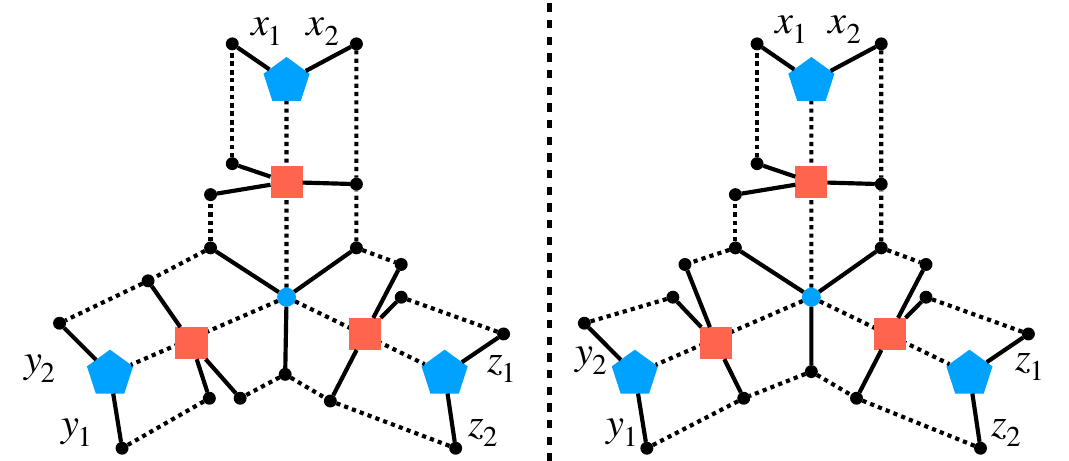}
            \caption{The possible structures for $Y$.}
            \label{fig:tree-B-1p}
        \end{figure}
    \end{proof}
    
\section{Main Proof}\label{sec:main}
    In this section, we give the proof of Theorem~\ref{thm:main}.

\subsection{Enumeration of Patterns of Cut Graphs}\label{sec:enum}
    In this section, we enumerate the equivalence classes of the equivalence relation $\sim$.
    Before the enumeration, we present several geometric lemmas.
    
\subsubsection{Geometric Lemma for Cut Graphs}
    \begin{lemma}\label{lem:geom-T-vertex}
        In any dissection of $T$ and $S$,
        no piece contains two corners of~$T$.
        Thus, the number of pieces must be at least~$3$,
        and in a three-piece dissection $P_1,P_2, P_3$,
        each $P_i$ contains exactly one corner of~$T$.
    \end{lemma}
    \begin{proof}
        Suppose a piece contains two corners of~$T$,
        whose Euclidean distance is $\tau = 2$.
        The longest distance between any two points in $S$ is $\sqrt{2\sqrt{3}}$, so no piece contains points that are farther apart than this distance.
        Since $\sqrt{2\sqrt{3}} < 2$, we obtain a contradiction.
    \end{proof}
    \begin{lemma}\label{lem:geom-S-vertex}
        Let $P_1,P_2, P_3$ be a three-piece dissection of $T$ and $S$.
        Then no $P_i$ contains two diagonal corners of $S$.
    \end{lemma}
    \begin{proof}
        If a single piece $P_i$ contains the diagonal corners of $S$, the diameter of $P_i$ is $\sqrt{2\sqrt{3}}$.
        Placing a line segment $d$ of this length $\sqrt{2\sqrt{3}}$ in $T$ requires its endpoints to reside within the region in $T$ bounded by the two blue parallel lines shown in Figure~\ref{fig:geom-S-vertex}.
        By Lemma~\ref{lem:geom-T-vertex}, piece $P_i$ must also include exactly one corner $v$ of~$T$.
        Since the line segment $d$ must be a diameter of $P_i$, $d$ must in fact have $v$ as an endpoint.
        Limited by the region, $d$ must form an angle of more than $\frac{\pi}{4}$ with an edge of~$T$ (the horizontal edge in Figure~\ref{fig:geom-S-vertex}).
        However in $S$, the diagonal has only $\frac{\pi}{4}$ of material on either side of its endpoints.
        Thus, $P_i$ cannot fully cover the corner $v$ of $T$, i.e., $v$ is in fact cut.
        This contradicts Lemma~\ref{lem:geom-T-vertex}.
    \end{proof}
    \begin{figure}[htbp]\centering
        \includegraphics[width=0.5\textwidth]{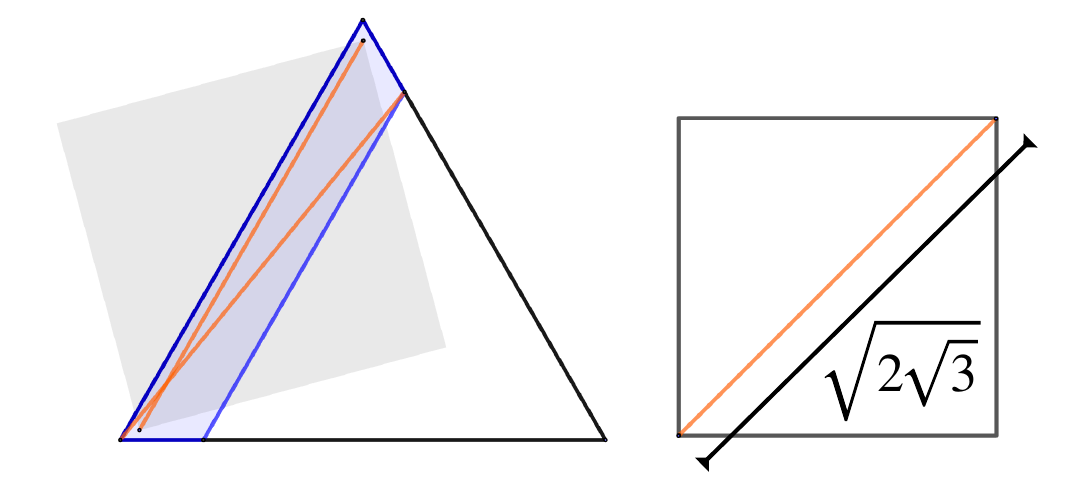}
        \caption{Ways to place a line segment of length $\sqrt{2\sqrt{3}}$, the diagonal of $S$, in~$T$.}
        \label{fig:geom-S-vertex}
    \end{figure}
    \begin{lemma}\label{lem:geom-simple}
        Let $P_1,P_2, P_3$ be a three-piece dissection of $T$ and $S$. 
        Then we can assume without loss of generality that each $P_i$ is a simple polygon.
    \end{lemma}
    \begin{proof}
        By Lemmas~\ref{lem:geom-T-vertex} and \ref{lem:geom-S-vertex}, the boundary of each polygon must be cut at least twice.  
        Thus, under the assumption of a three-piece dissection, the number of holes in a single piece is at most one, and its interior must be filled by exactly one other piece.  
        In this case (right of Figure~\ref{fig:non-simple}), we can remove the cycle to obtain a dissection with fewer pieces.  
        Similarly, if the dissection includes cut lines whose endpoints are internal (left of Figure~\ref{fig:non-simple}), we can remove these cuts while maintaining the same number of pieces.  
        Thus, in either case, we obtain a dissection with the same or smaller number of pieces that are all simple.
    \end{proof} 

    \begin{figure}[htbp]\centering
        \includegraphics[width=\textwidth]{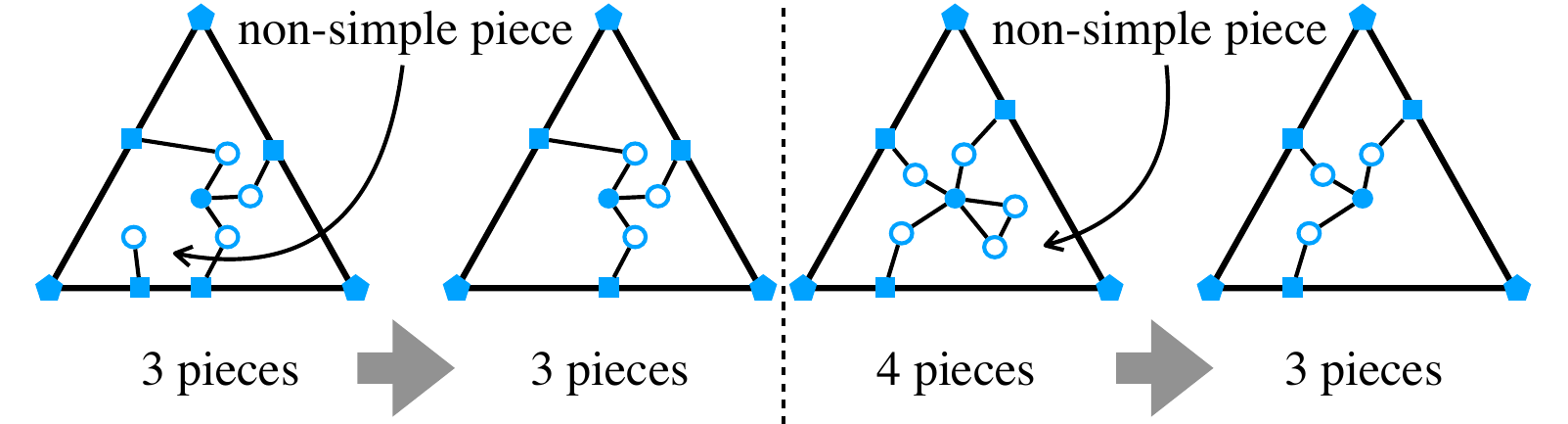}
        \caption{After removing cut lines that create non-simple polygons, the obtained dissection has the same or fewer number of pieces.}
        \label{fig:non-simple}
    \end{figure}
    
\subsubsection{Procedure for Enumerating Cut Graphs}
    We classify the possible equivalence classes of the relation $\sim$ using Lemmas~\ref{lem:geom-T-vertex}, \ref{lem:geom-S-vertex}, and \ref{lem:geom-simple}.

    We begin by abstracting the boundary of the target shape as a disk, ignoring differences in vertex types.  
    First, consider the case of partitioning the disk into two pieces.  
    Since each piece must be a simple polygon (by Lemma~\ref{lem:geom-simple}), the boundary must be cut at least once.  
    As shown on the left of Figure~\ref{fig:abstract-enum}, there is a unique way to perform such a cut: connect two points on the circular boundary with a cut line~$C$.  
    The resulting boundary of the pieces consists of three types of segments: parts from the cut line~$C$, parts from the original boundary, and two switching points between them.

    To obtain three pieces, we make a second cut $C^{\prime}$ on one of the two pieces, again connecting two boundary points.  
    As shown on the right of Figure~\ref{fig:abstract-enum}, there are exactly six distinct ways to make such a second cut.  
    These yield six topologically distinct configurations, which we enumerate explicitly in Figure~\ref{fig:abstract-enum} and label as \defn{Type~$\mathfrak{A}$} through \defn{Type~$\mathfrak{F}$}.
    \begin{figure}[htbp]\centering
        \includegraphics[width=\textwidth]{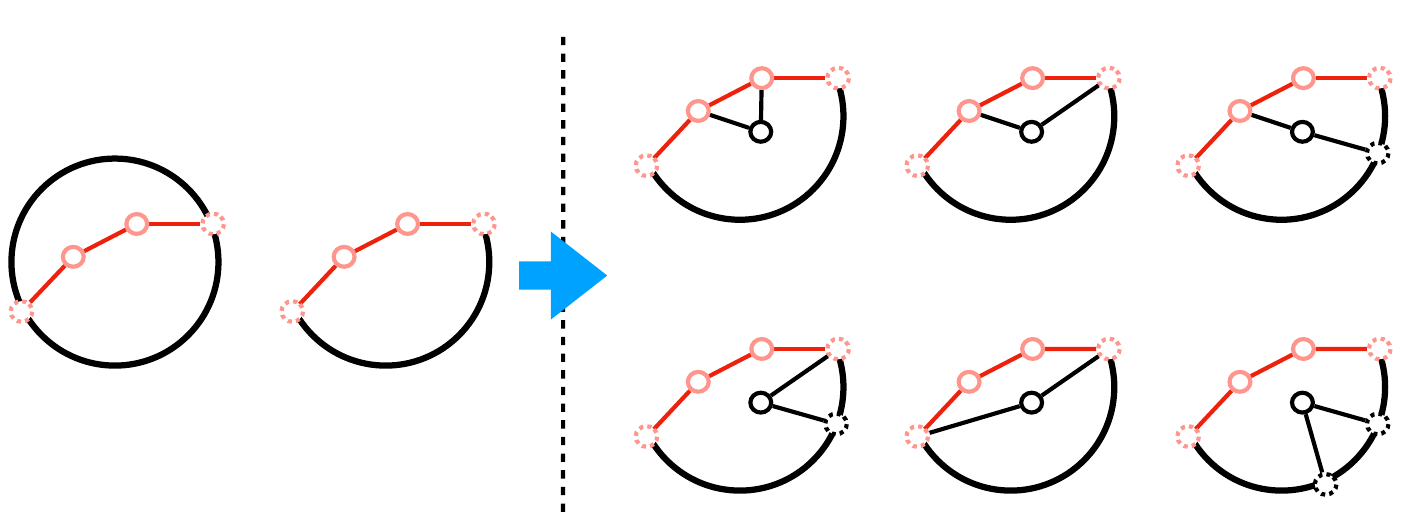}
        \caption{Left: The unique configuration for the first cut of the disk and the resulting boundary structure. Right: The six possible configurations for the second cut.}
        \label{fig:abstract-enum-pre}
    \end{figure}
    \begin{figure}[htbp]\centering
        \includegraphics[width=\textwidth]{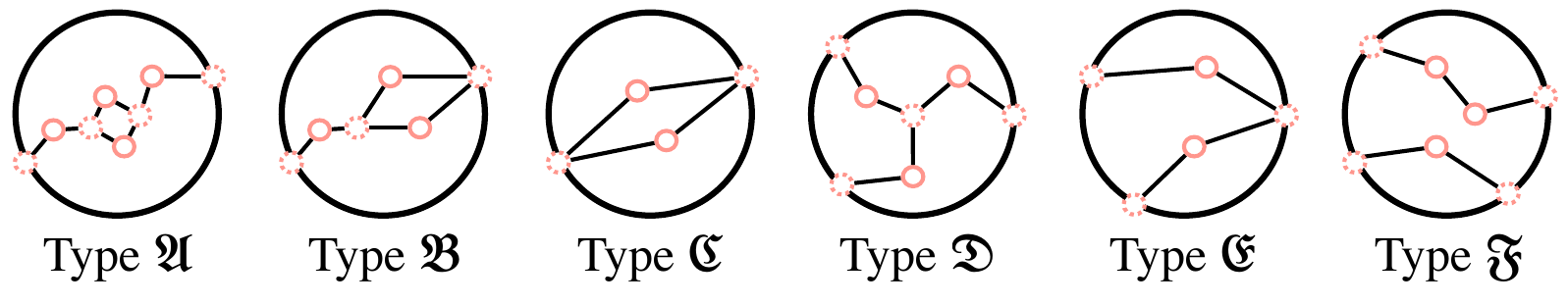}
        \caption{The topological classification of cut graphs when the boundary is abstracted.}
        \label{fig:abstract-enum}
    \end{figure}

    We now apply this abstract classification to the actual dissection of the target shapes $S$ and $T$.  
    To do so, we must account for how the endpoints of each cut line attach to the corners or sides of the target shape, as well as the presence of vertices of degree three or higher.
    Using Lemma~\ref{lem:geom-T-vertex}, we enumerate the equivalence classes of $G^T$ as shown in Figure~\ref{fig:class_T}, denoting them as $G^T_i$.
    \begin{figure}[htbp]\centering
        \includegraphics[width=\textwidth]{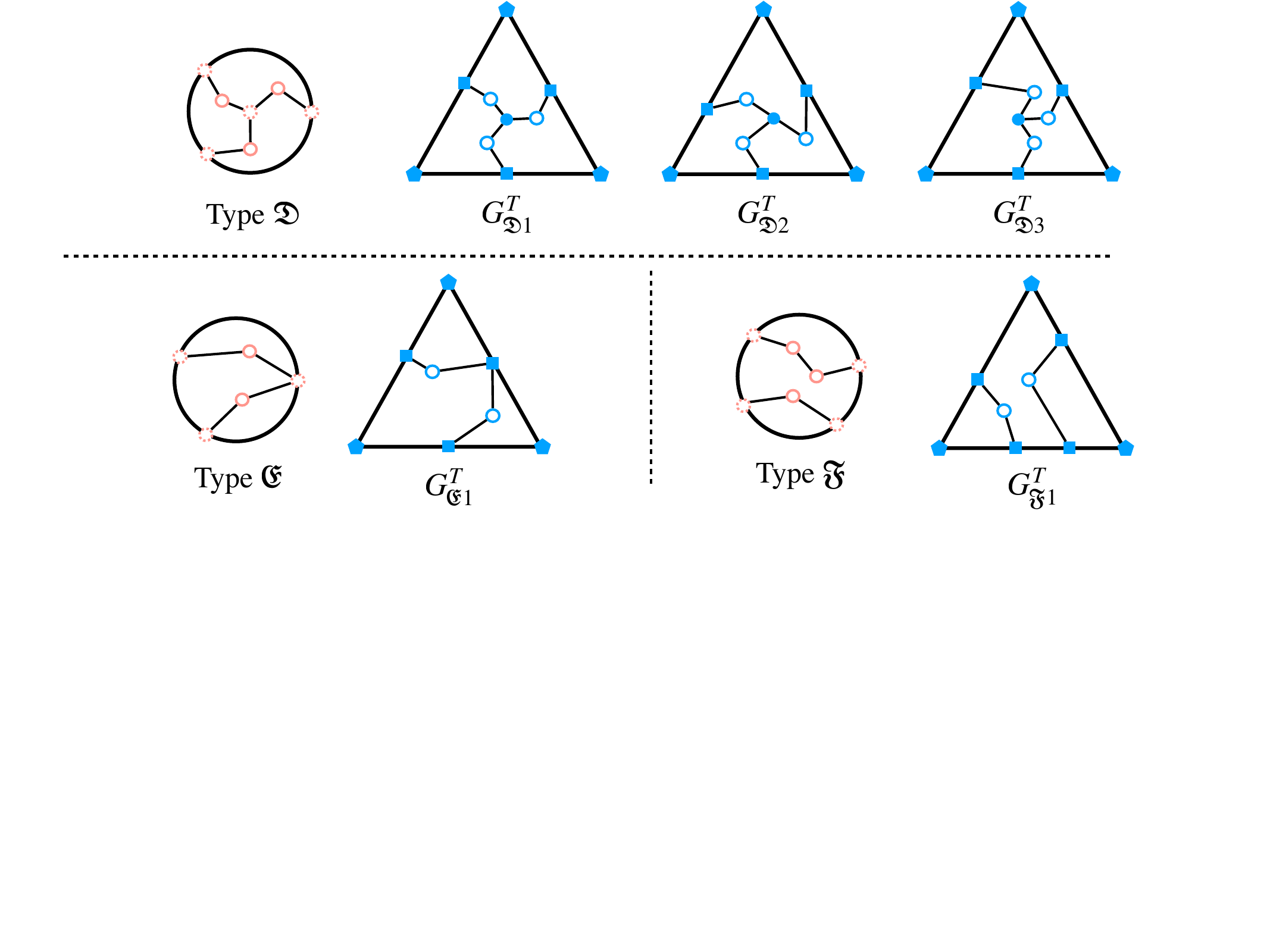}
        \caption{The possible equivalence classes of $G^T$. Open circles represent possible paths of degree-2 vertices from subdivision. $G^T_{\mathfrak{D}1}$, $G^T_{\mathfrak{D}2}$, and $G^T_{\mathfrak{D}3}$ have a degree-3 vertex of type 4, 6 (flat), and 5, respectively.}
        \label{fig:class_T}
    \end{figure}
    Similarly, applying Lemma~\ref{lem:geom-S-vertex}, we enumerate the equivalence classes of $G^S$ as shown in Figures~\ref{fig:class_S-1} and \ref{fig:class_S-2}, denoted as $G^S_j$.
    \begin{figure}[htbp]\centering
        \includegraphics[width=\textwidth]{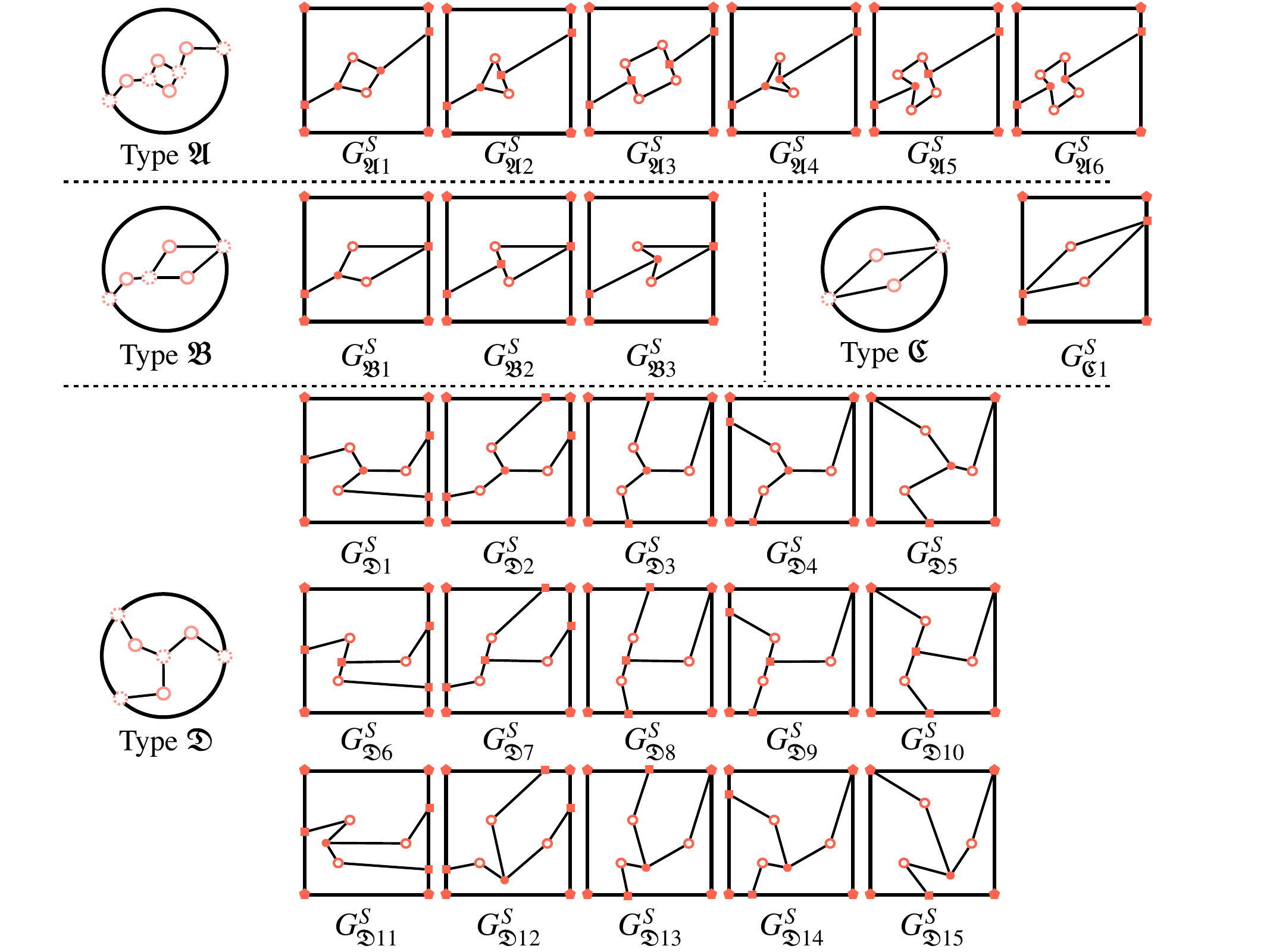}
        \caption{The possible equivalence classes of $G^S$ for \defn{Type~$\mathfrak{A}$}, \defn{Type~$\mathfrak{B}$}, \defn{Type~$\mathfrak{C}$}, and \defn{Type~$\mathfrak{D}$}.}
        \label{fig:class_S-1}
    \end{figure}
    \begin{figure}[htbp]\centering
        \includegraphics[width=\textwidth]{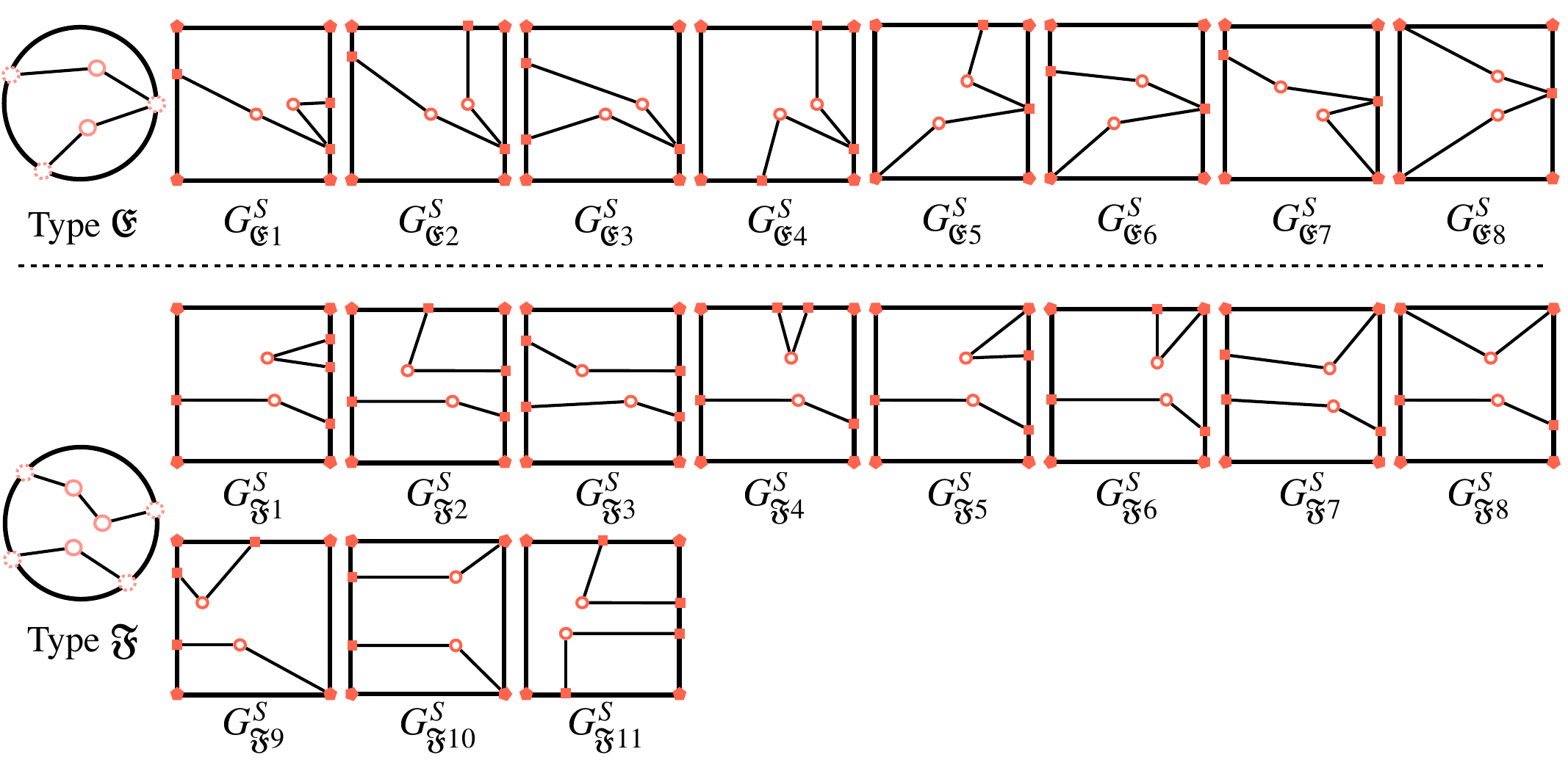}
        \caption{The possible equivalence classes of $G^S$ for \defn{Type~$\mathfrak{E}$} and \defn{Type~$\mathfrak{F}$}}
        \label{fig:class_S-2}
    \end{figure}

    To identify feasible pairings between $G^S_{-}$ and $G^T_{-}$, we introduce the following angular invariant:
    \begin{definition}
        For a graph $G^X$ and an angle $\theta < \pi$, the \defn{$\theta$-diff} is defined as the number of times $\theta$ appears minus the number of times $2\pi - \theta$ appears as an interior angle.  
        We define \defn{cc-diff} as the sum of all $\theta$-diffs for $\theta < \pi$, and \defn{tri-diff} as the difference $\pi/3$-diff $-\, 2\pi/3$-diff.
    \end{definition}
    Since $\theta$-diff is preserved under subdivision, the cc-diff of each $G^T_i$ and $G^S_j$ can be determined by counting convex minus reflex angles at solid vertices in Figures~\ref{fig:class_T}--\ref{fig:class_S-2}:
    \begin{itemize}
        \item The value of cc-diff is 
            $12$ for $G^T_{\mathfrak{D}1}$, 
            $11$ for each of $\{G^T_{\mathfrak{D}2}, G^T_{\mathfrak{F}1}\}$, and
            $10$ for each of $\{G^T_{\mathfrak{D}3}, G^T_{\mathfrak{E}1}\}$.
        \item The value of cc-diff is 
        \begin{itemize}
            \item $14$ for $G^S_{\mathfrak{A}1}$, 
            \item $13$ for each of $\{G^S_{\mathfrak{A}2}, G^S_{\mathfrak{D}1}, G^S_{\mathfrak{D}2}\}$,
            \item $12$ for each of $\{G^S_{\mathfrak{A}3}, G^S_{\mathfrak{A}4}, G^S_{\mathfrak{B}1}, G^S_{\mathfrak{D}3}, G^S_{\mathfrak{D}4}, G^S_{\mathfrak{D}6}, G^S_{\mathfrak{D}7}, G^S_{\mathfrak{F}1}, G^S_{\mathfrak{F}2}, G^S_{\mathfrak{F}3}, G^S_{\mathfrak{F}4}, G^S_{\mathfrak{F}11}\}$,
            \item $11$ for each of $\{G^S_{\mathfrak{A}5}, G^S_{\mathfrak{B}2}, G^S_{\mathfrak{D}5}, G^S_{\mathfrak{D}8}, G^S_{\mathfrak{D}9}, G^S_{\mathfrak{D}11}, G^S_{\mathfrak{D}12}, G^S_{\mathfrak{E}1}, G^S_{\mathfrak{E}2}, G^S_{\mathfrak{E}3}, G^S_{\mathfrak{E}4}, G^S_{\mathfrak{F}5}, G^S_{\mathfrak{F}6}, G^S_{\mathfrak{F}7} , G^S_{\mathfrak{F}9}\}$,
            \item $10$ for each of $\{G^S_{\mathfrak{A}6}, G^S_{\mathfrak{B}3}, G^S_{\mathfrak{C}1}, G^S_{\mathfrak{D}10}, G^S_{\mathfrak{D}13}, G^S_{\mathfrak{D}14}, G^S_{\mathfrak{E}5}, G^S_{\mathfrak{E}6}, G^S_{\mathfrak{E}7}, G^S_{\mathfrak{F}8}, G^S_{\mathfrak{F}10}\}$, and
            \item $9$ for each of $\{G^S_{\mathfrak{D}15}, G^S_{\mathfrak{E}8}\}$.
        \end{itemize}
    \end{itemize}
    Moreover, tri-diff can be computed as at least $3$ for each of $\{G^T_{\mathfrak{D}2}, G^T_{\mathfrak{F}1}\}$ and less than $3$ for each of $\{G^S_{\mathfrak{A}5}, \allowbreak G^S_{\mathfrak{D}8}, \allowbreak G^S_{\mathfrak{D}9}, \allowbreak G^S_{\mathfrak{D}11}, \allowbreak G^S_{\mathfrak{D}12}, \allowbreak G^S_{\mathfrak{F}5}, G^S_{\mathfrak{F}6}, \allowbreak G^S_{\mathfrak{F}7}, \allowbreak G^S_{\mathfrak{F}9}\}$.

    In order to generate the same set of pieces by $G^S$ and $G^T$, $\theta$-diff must match between them for every $\theta < \pi$.
    Therefore, only the following combinations can be feasible, where $\{G^T_{i}, \ldots\} \times \{G^S_{j}, \ldots\}$ denotes the set of all pairwise combinations of $G^T_i$ and $G^S_j$:
    \begin{itemize}
        \item $\{G^T_{\mathfrak{D}1}\} \times \{G^S_{\mathfrak{A}3}, G^S_{\mathfrak{A}4}, G^S_{\mathfrak{B}1}, G^S_{\mathfrak{D}3}, G^S_{\mathfrak{D}4}, G^S_{\mathfrak{D}6}, G^S_{\mathfrak{D}7}, G^S_{\mathfrak{F}1}, G^S_{\mathfrak{F}2}, G^S_{\mathfrak{F}3}, G^S_{\mathfrak{F}4}, G^S_{\mathfrak{F}11}\}$,
        \item $\{G^T_{\mathfrak{D}2}, G^T_{\mathfrak{F}1}\} \times \{G^S_{\mathfrak{B}2}, G^S_{\mathfrak{D}5}, G^S_{\mathfrak{E}1}, G^S_{\mathfrak{E}2}, G^S_{\mathfrak{E}3}, G^S_{\mathfrak{E}4}\}$,
        \item $\{G^T_{\mathfrak{D}3}, G^T_{\mathfrak{E}1}\} \times \{G^S_{\mathfrak{A}6}, G^S_{\mathfrak{B}3}, G^S_{\mathfrak{C}1}, G^S_{\mathfrak{D}10}, G^S_{\mathfrak{D}13}, G^S_{\mathfrak{D}14}, G^S_{\mathfrak{E}5}, G^S_{\mathfrak{E}6}, G^S_{\mathfrak{E}7}, G^S_{\mathfrak{F}8}, G^S_{\mathfrak{F}10}\}$.
    \end{itemize}

\subsection{Lemmas for Equilateral Triangle and Square}\label{sec:lemmas-tri-squ}
    In this section, we present two lemmas that are useful in proving properties related to the dissection between an equilateral triangle and a square.

    The first lemma constrains on the structure of the connected components of $\VD$ and can be described as follows:
    \begin{lemma}\label{lem:VG-component}
        Any connected component of $\VD$ includes either all corner vertices of $T$ or none, and includes an even number of corner vertices of $S$.
    \end{lemma}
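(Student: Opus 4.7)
The plan is to prove both claims by a single angle-counting identity applied to each connected component of $\VG$, followed by modular arithmetic.

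First, I would observe that $\VG$ is bipartite and that every link is labeled by a piece corner carrying a definite interior angle. For a node $v \in V^T(\VG)$, the piece corners placed at $v$ when assembling $T$ tile the interior angle of $T$ at $v$, so the sum of angles over the links of $\VG$ incident to $v$ equals that interior angle; the analogous statement holds for $v \in V^S(\VG)$. Since every link in a connected component $C$ has both endpoints in $C$, summing link-angles in two different ways yields
\[
  \sum_{v \in V^T(\VG) \cap C} \angle_T(v) \;=\; \sum_{v \in V^S(\VG) \cap C} \angle_S(v),
\]
where $\angle_X(v)$ denotes the interior angle of $X$ at $v$: namely $\pi/3$ at a vertex of $T$, $\pi/2$ at a vertex of $S$, $\pi$ at a non-corner boundary point, and $2\pi$ at a strictly interior point.

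Next, I would partition the nodes of $V^T(\VG) \cap C$ into corners of $T$, non-corner boundary points, and strictly interior points, with counts $n^0_T, n^{\partial}_T, n^{\mathrm{int}}_T$, and similarly define $n^0_S, n^{\partial}_S, n^{\mathrm{int}}_S$ on the $S$-side. The identity becomes
\[
  \tfrac{\pi}{3}\,n^0_T + \pi\,n^{\partial}_T + 2\pi\,n^{\mathrm{int}}_T \;=\; \tfrac{\pi}{2}\,n^0_S + \pi\,n^{\partial}_S + 2\pi\,n^{\mathrm{int}}_S,
\]
which after clearing denominators reads
\[
  2\,n^0_T + 6\,n^{\partial}_T + 12\,n^{\mathrm{int}}_T \;=\; 3\,n^0_S + 6\,n^{\partial}_S + 12\,n^{\mathrm{int}}_S.
\]
Reducing modulo $2$ gives $n^0_S \equiv 0 \pmod 2$, so $C$ contains an even number of vertices of $S$. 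Reducing modulo $3$ gives $2\,n^0_T \equiv 0 \pmod 3$, hence $n^0_T \equiv 0 \pmod 3$; since $T$ has only three vertices, $n^0_T \in \{0,3\}$, which is the first claim.

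The one subtlety, and therefore the main thing to check carefully, is the angle-decomposition step: that the piece corners meeting at each node $v$ genuinely tile the interior angle of $X$ at $v$ without gaps or overlaps, and that every such piece corner contributes exactly one link of $\VG$ incident to $v$. Both facts follow directly from the definition of a dissection and the definition of $\VG$ (one link per element of $V(P_1) \cup V(P_2) \cup V(P_3)$), so once they are recorded the lemma falls out of pure arithmetic and no deeper input such as Observation~\ref{obs:VG-angle} is needed.
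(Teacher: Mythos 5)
Your proof is correct and follows essentially the same approach as the paper: both rest on the identity $\sum_{v\in C\cap V^{T}(\VG)}\angle(v)=\sum_{v\in C\cap V^{S}(\VG)}\angle(v)$ for each component $C$, combined with the fact that every node other than a corner of $S$ or $T$ contributes a multiple of $\pi$, so the corners of $T$ (each $\pi/3$) and of $S$ (each $\pi/2$) must balance modulo $\pi$. Your justification of that identity by double-counting the link angles directly is if anything cleaner than the paper's appeal to Observation~\ref{obs:VG-angle}, and your explicit mod-$2$ and mod-$3$ reductions spell out the final step the paper leaves implicit.
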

    \begin{proof}
        Except for the corner vertices of $S$ and $T$, the weight of any vertex is a multiple of $\pi$ (see Figure~\ref{fig:VG-component}).
        Therefore, if the conditions in the statement do not hold, the sums on both sides will not match.
        It contradicts to Observation \ref{obs:VG-anglesum}.
    \end{proof}

        \begin{figure}[htbp]\centering
            \includegraphics[width=\textwidth]{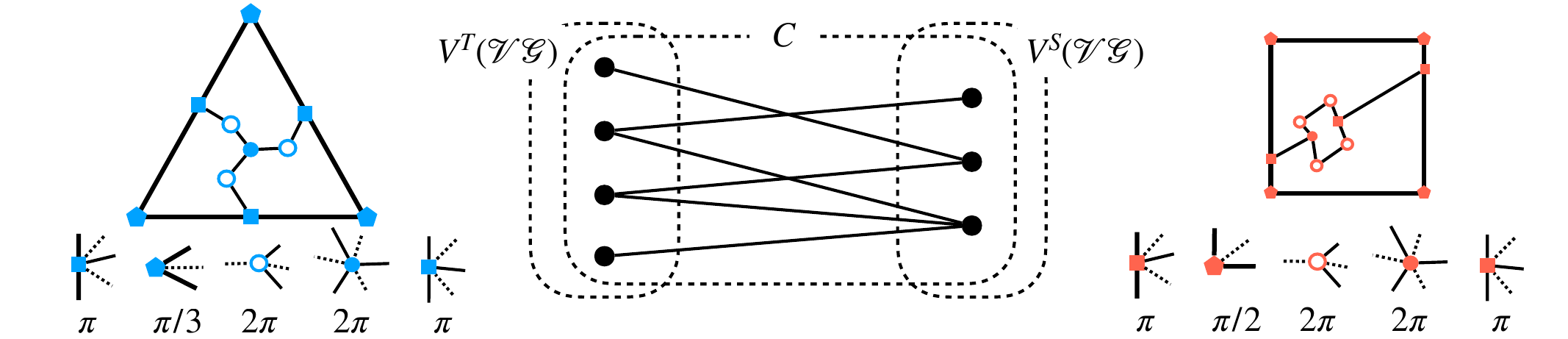}
            \caption{A connected component of $\VD$ and the sum of interior angles within it.}
            \label{fig:VG-component}
        \end{figure}
    
    The second lemma is used to handle well-behavior properties.
    When dealing with the 3-piece dissection between an equilateral triangle and a square, a path ceases to be well-behaved when it passes through a corner vertex, a vertex adjacent to a flat vertex, or a side vertex located on a trisected edge of the target shape.
    Here, in addition to the node in $\ED$ added for an edge containing a flat node, we also define the node corresponding to the middle of a trisected side as a \defn{trisected-mid} node.
    Thus, to verify that a path is well-behaved, it is sufficient to check that it does not pass through a corner vertex and does not go through the endpoints of a flat node or a trisected-mid node.
    Regarding the behavior of flat nodes and trisected-mid nodes in $\ED$, the following holds:
    \begin{definition}
        Let $e$ be a boundary edge of $G^S$ and $v,v^{\prime}$ be the adjacent vertices of $x$. 
        $(v, e, v^{\prime})$ is called a \defn{U-shaped boundary} if both $v$ and $v^{\prime}$ are a corner vertex of degree 1 in~$\VD$.
    \end{definition}
     \begin{lemma}\label{lem:U-shape}
        Assume that $(v, e, v^{\prime})$ forms a U-shaped boundary and that the connected components containing $v$ and $v^{\prime}$ (which may be the same) are both simple paths. Then, the other end node $e^{\prime}$ of the path in $\ED$ with $e$ as an end node is either a flat node or a trisected-mid node in $\ED$.
    \end{lemma}
    \begin{proof}
        Let $W = (v, v_1, \ldots, v_k)$ and $W^{\prime} = (v^{\prime}, v^{\prime}_1, \ldots, v^{\prime}_{k^{\prime}})$ be the paths in $\VD$ beginning from $v$ and $v^{\prime}$, respectively. 
        
        First, we show that for each of the sequences $v_1, \ldots, v_{k-1}$ and $v^{\prime}_1, \ldots, v^{\prime}_{k^{\prime}-1}$, at least one element in each sequence is either a flat vertex or a boundary vertex of $S$ or $T$. 
        It suffices to show this for $v_1, \ldots, v_{k-1}$.
        Suppose, for contradiction, that all $v_1, \ldots, v_{k-1}$ are paired  vertices. 
        Then, by Lemma~\ref{obs:VG-angle}, $\angle{v_k}$ would be either $\pi/2$ or $3\pi/2$, depending on whether $v_k \in N^{T}(\VD)$ or $v_k \in N^{S}(\VD)$. 
        However, since $v_k$ is an endpoint of the path, it must be a corner vertex of $X \in \{S, T\}$. In the case of $T$, the internal angle at $v_k$ would be $\pi/3$, and in the case of $S$, it would be $\pi/2$, leading to a contradiction.
    
        Therefore, both $v_1, \ldots, v_{k-1}$ and $v^{\prime}_1, \ldots, v^{\prime}_{k^{\prime}-1}$ must contain at least one flat vertex or boundary vertex of $S$ or $T$. Let $s$ and $s^{\prime}$ be the first such points encountered along $W$ and $W^{\prime}$ from $v$ and $v^{\prime}$, respectively. By Lemma~\ref{obs:VG-angle}, both $s$ and $s^{\prime}$ must belong to $N^{T}(\VD)$.
        Now, assume for contradiction that the statement of the lemma does not hold. 
        Suppose that $e^{\prime}$ is neither a flat node nor trisected-mid node. 
        Then, $e^{\prime}$ cannot have $s$ (or $s^{\prime}$) as its adjacent side. 
        If it did, the other endpoint of $e^{\prime}$ would be a corner $t$ of $T$, and $t$ and $s^{\prime}$ (or $s$) would be connected in $\VD$, which contradicts the fact that $W^{\prime}$ (or $W$) is a path and Lemma~\ref{lem:VG-component}.
        Therefore, at both $s$ and $s^{\prime}$, the path in $\ED$ starting from $e$ has degree-2 and corresponds to cut lines of length $\sigma$ perpendicular to a side of $S$ at both $s$ and $s^{\prime}$. 
        If the cut lines attached to $s$ and $s^{\prime}$ are the same, this would imply that $T$ has a pair of parallel sides on its boundary (right of Figure~\ref{fig:U-shape}), leading to a contradiction. 
        On the other hand, if the cut lines attached to $s$ and $s^{\prime}$ are different, this implies the existence of a pair of cut lines of length $\sigma$ orthogonal to the sides of $T$, which intersect within $T$ (left of Figure~\ref{fig:U-shape}), leading to another contradiction.
    
        In the case where $W$ and $W^{\prime}$ are the same path, it is possible that $s$ and $s^{\prime}$ are the same vertex. However, this would mean that the endpoint of the path in $\ED$ that starts at $e$ would be $e$ itself, contradicting the definition of a path.
    \end{proof}
    
        \begin{figure}[htbp]
            \centering
            \includegraphics[width=\textwidth]{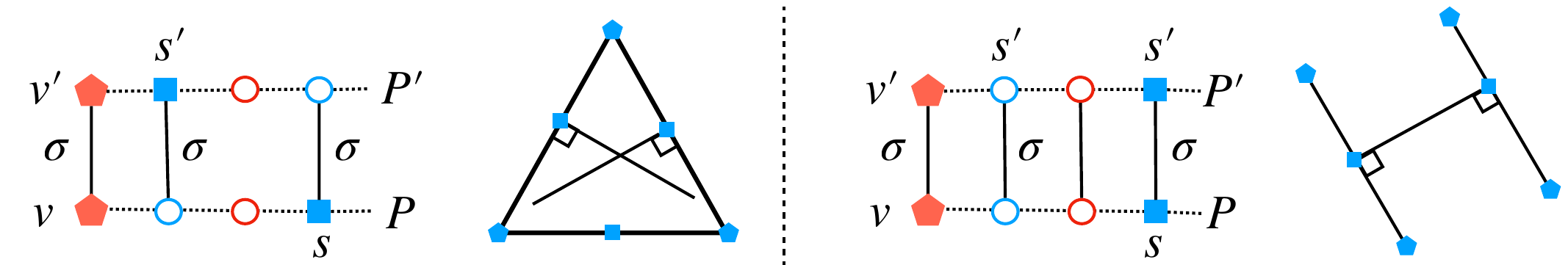}
            \caption{Illustration of when the cut lines attached to $s$ and $s^{\prime}$ are the same (left) and different (right).}
            \label{fig:U-shape}
        \end{figure}
    
\subsection{Proofs for Individual Cases}\label{sec:individual}
    In this section, we apply the approach to eliminate the remaining cases narrowed down in Section~\ref{sec:enum}.
    Rather than deriving a classification from abstract principles, we arrived at a practical case division through extensive trial-and-error analysis of all surviving cut graphs.
    Based on recurring structural features, we group the configurations into three representative types, which we refer to as Cases~A, B, and~C.
    Each of these captures a common geometric obstruction, and together they account for all infeasible instances identified in the enumeration.
    We begin by formulating Lemma~\ref{lem:struct_BandC} that justifies this trichotomy.
    \begin{lemma}\label{lem:struct_BandC}
        We define Cases A, B, and C as follows:
        \begin{itemize}
            \item \defn{Case A}: 
            $\{G^T_{\mathfrak{D}1}\} \times \{G^S_{\mathfrak{D}6}, G^S_{\mathfrak{F}1}, G^S_{\mathfrak{F}2}, G^S_{\mathfrak{F}4}, G^S_{\mathfrak{F}11}, G^S_{\mathfrak{F}3}\}$,
            $\{G^T_{\mathfrak{D}2}, G^T_{\mathfrak{F}1}\} \times \{G^S_{\mathfrak{B}2}. G^S_{\mathfrak{E}1}, G^S_{\mathfrak{E}3}\}$, 
            $\{G^T_{\mathfrak{D}3}, G^T_{\mathfrak{E}1}\} \times \{G^S_{\mathfrak{F}8}\}$
            \item \defn{Case B}: 
            $\{G^T_{\mathfrak{D}1}\} \times \{G^S_{\mathfrak{A}3}, G^S_{\mathfrak{D}7}\}$, 
            $\{G^T_{\mathfrak{D}2}, G^T_{\mathfrak{F}1}\} \times \{G^S_{\mathfrak{E}2}, G^S_{\mathfrak{E}4}\}$
            \item \defn{Case C}: 
            $\{G^T_{\mathfrak{D}1}\} \times \{G^S_{\mathfrak{A}4},G^S_{\mathfrak{B}1}, G^S_{\mathfrak{D}3}, G^S_{\mathfrak{D}4}\}$,
            $\{G^T_{\mathfrak{D}2},G^T_{\mathfrak{F}1}\} \times \{G^S_{\mathfrak{D}5}\}$, 
            $\{G^T_{\mathfrak{D}3},G^T_{\mathfrak{E}1}\} \times \{G^S_{\mathfrak{A}6}, \allowbreak G^S_{\mathfrak{B}3}, \allowbreak G^S_{\mathfrak{C}1}, \allowbreak G^S_{\mathfrak{D}10}, \allowbreak G^S_{\mathfrak{D}13}, \allowbreak G^S_{\mathfrak{D}14}, \allowbreak G^S_{\mathfrak{E}5}, G^S_{\mathfrak{E}6}, \allowbreak G^S_{\mathfrak{E}7}, G^S_{\mathfrak{F}10}\}$
        \end{itemize}
        \begin{figure}[htbp]\centering
            \includegraphics[width=\textwidth]{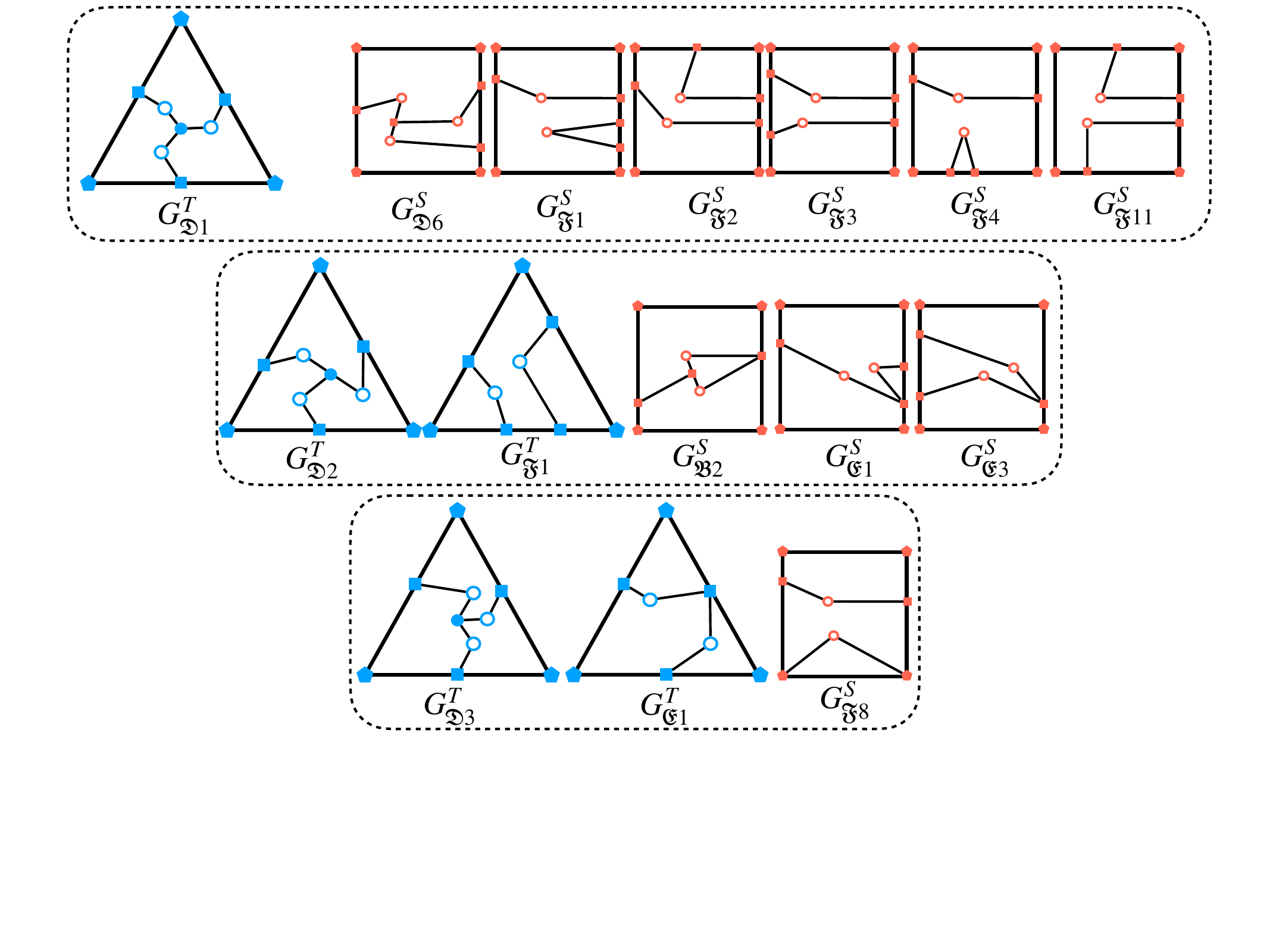}
            \caption{The combinations of cut line graphs in Case A.}
            \label{fig:CaseA}
        \end{figure}
        \begin{figure}[htbp]\centering
            \includegraphics[width=\textwidth]{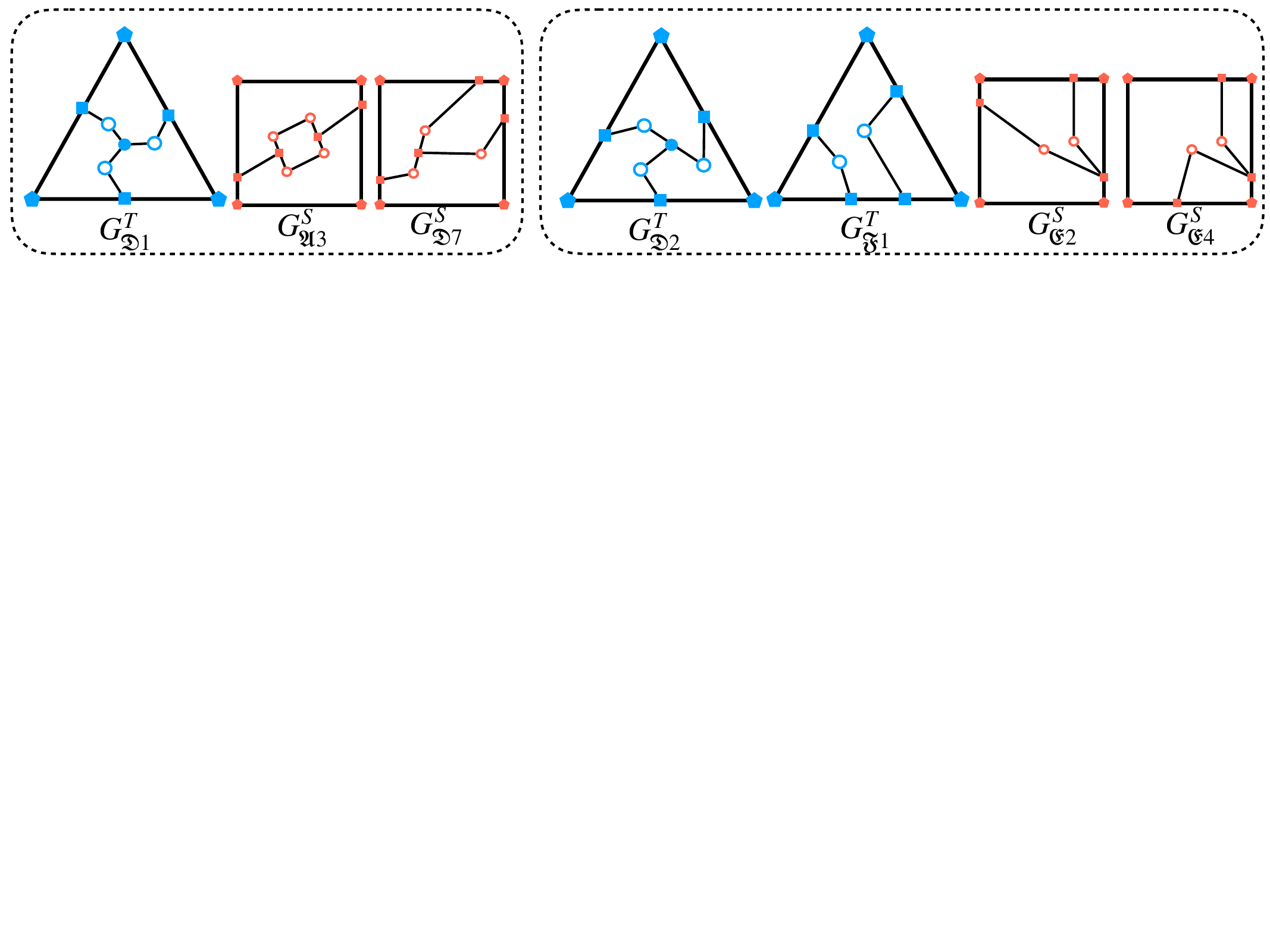}
            \caption{The combinations of cut line graphs in Case B.}
            \label{fig:CaseB}
        \end{figure}
        \begin{figure}[htbp]\centering
            \includegraphics[width=\textwidth]{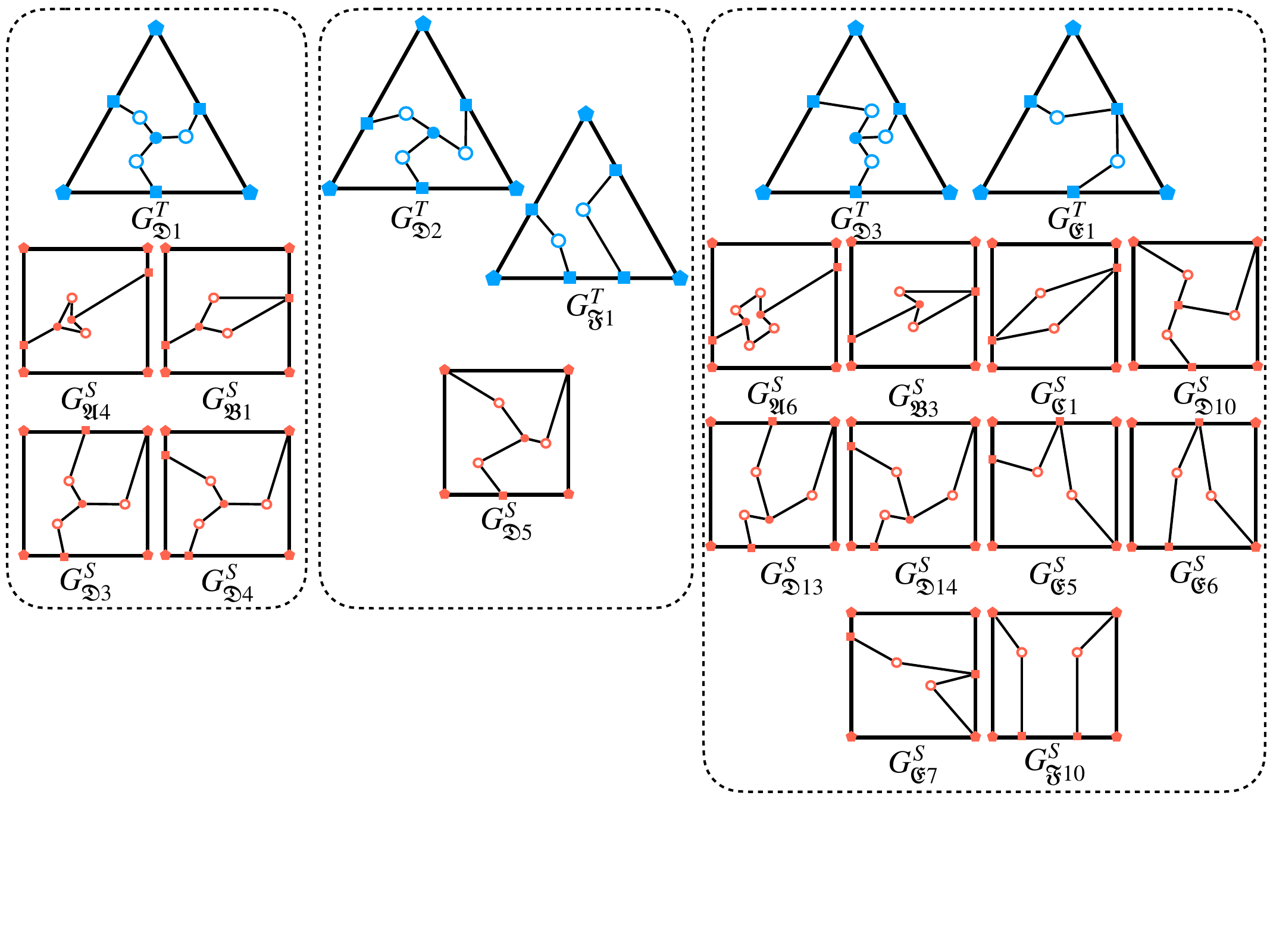}
            \caption{The combinations of cut line graphs in Case C.}
            \label{fig:CaseC}
        \end{figure}

        In Case A (Figure~\ref{fig:CaseA}), the number of flat edges and the middle of trisected edges of $S$ or $T$ is less than the number of U-shaped boundaries. 
        In Case B (Figure~\ref{fig:CaseB}), there exists a \defn{well-behaved tree} $Y$ of $\ED$ that connects a degree-$3$ vertex and each corners of $T$ by a well-behaved path.
        In Case C (Figure~\ref{fig:CaseC}), there exists a well-behaved path $W$ that connects vertices of $S$.
    \end{lemma}
    \begin{proof}
        In Case A, the statement clearly holds.
    
        In Case B, by Lemma~\ref{lem:VG-component}, the connected components of $\VD$ consist of a tree $Y$ connecting the corners of $T$ and paths $W_1$ and $W_2$ connecting the corners of $S$, since there is a single vertex of degree 3 of $G^S$ and $G^T$. 
        According to Lemma~\ref{lem:U-shape}, the paths in $\ED$ starting from the edges of the U-shaped boundary have either a flat node or a trisected-mid node as an endpoint. 
        Thus, the adjacent vertices belong to either $W_1$ or $W_2$. 
        Consequently, all three paths composing $Y$ are well-behaved.
    
        In Case C, there are at most three vertices of degree 3 in $\VD$, and the number of leaves in a connected component is at most five. 
        Therefore, by Lemma~\ref{lem:VG-component}, $\VD$ must include a connected component that forms a path $W$ connecting a pair of corners of $S$. 
        It is clear that $W$ is well-behaved in the combinations $\{G^T_{\mathfrak{D}1}\} \times \{G^S_{\mathfrak{A}4}, G^S_{\mathfrak{B}1}, G^S_{\mathfrak{D}3}, G^S_{\mathfrak{D}4}\}$ and $\{G^T_{\mathfrak{D}3}, G^T_{\mathfrak{E}1}\} \times \{G^S_{\mathfrak{A}6}, G^S_{\mathfrak{B}3}, G^S_{\mathfrak{C}1}, G^S_{\mathfrak{D}13}, G^S_{\mathfrak{D}14}, G^S_{\mathfrak{E}5}, G^S_{\mathfrak{E}6}, G^S_{\mathfrak{E}7}\}$, since by Observation~\ref{obs:VG-angle}, $W$ does not pass through corner vertices of $S$ with degree greater than two. 
        Now, we consider the remaining cases of $\{G^T_{\mathfrak{D}2}, G^T_{\mathfrak{F}1}\} \times \{G^S_{\mathfrak{D}5}\}$ and $\{G^T_{\mathfrak{D}3}, G^T_{\mathfrak{E}1}\} \times \{G^S_{\mathfrak{D}10}, G^S_{\mathfrak{F}10}\}$. 
        Let $e$ be the side of $S$ whose both corners have degree greater than 2 in $\VD$. 
        Since $e$ is monochromatic, any edge $e^{\prime}$ sharing a path with $e$ in $\ED$ is also monochromatic. 
        If $e^{\prime}$ is a flat node or a trisected-mid node, $W$ does not pass through these endpoints, making it well-behaved. 
        If $e^{\prime}$ is an edge bisecting an side of $T$, then any edge $e^{\prime\prime}$ sharing this edge with $e^{\prime}$ is also monochromatic. 
        By following paths in $\ED$ recursively, we eventually reach a flat node or a trisected-mid node, which is also monochromatic. 
        Thus, $W$ does not pass through these vertices, making it well-behaved.
    \end{proof}

    The structural property established in the above lemma allows us to rule out each of the remaining cases. 
    We now examine them one by one.
    \begin{lemma}
        Case A is infeasible.
    \end{lemma}
    \begin{proof}
        Each Case A pattern contains a single vertex of degree 3, and the three corners of $T$ are connected by a tree $Y$ that includes this degree-3 vertex. 
        Therefore, the corners of $S$ on the U-shaped boundary $(v, e, v^{\prime})$ are both included in paths of $\VD$.
    
        By Lemma~\ref{lem:U-shape}, the other endpoint of the path in $\ED$ starting from $e$ must be either a flat node or a trisected-mid node.
        Furthermore, by Lemma~\ref{lem:struct_BandC}, the number of such edges is smaller than the number of U-shaped boundaries.
        This leads to a contradiction, proving that Case A is infeasible.
    \end{proof}
    \begin{lemma}
        Case B is infeasible.
    \end{lemma}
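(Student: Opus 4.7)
The plan is to use Lemma~\ref{lem:alternate} on the branches of the tree $Y$ provided by Lemma~\ref{lem:struct_BandC} to force a rational equality between $\sigma$ and $\tau$, contradicting the irrationality of $\sigma/\tau = \tfrac{1}{2}\sqrt{\sqrt{3}}$.

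By Lemma~\ref{lem:struct_BandC}, in every configuration of Case B the graph $\VG$ contains a tree $Y$ with a single degree-$3$ vertex $c$, and $c$ is joined to each of the three vertices $t_1, t_2, t_3$ of $T$ by a well-behaved path $Q_i$; the remaining components of $\VG$ are two paths that pair up the four vertices of $S$. For each branch $Q_i$ I would pick an edge $e_i$ of a piece incident to $c$ on the side of $Q_i$ and the terminal edge $e'_i$ incident to $t_i$, and apply Lemma~\ref{lem:alternate} to obtain
\[
    |e_i| - |L_{i,1}| + |L_{i,2}| - \cdots \pm |e'_i| = 0,
\]
where the $L_{i,j}$ are the full edges of $G^T$ (or T-cut edges) that the divide-in-two vertices of $Q_i$ split. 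From Figures~\ref{fig:class_T} and~\ref{fig:class_S}, every $|L_{i,j}|$ is either $\tau$, $\sigma$, or the total length of a T-cut edge, which also equals $\sigma$; the terminal $e'_i$ is a sub-segment of a $\tau$-edge of $T$ at $t_i$; and each $e_i$ is a sub-segment emanating from $c$.

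The next step is to add the three identities in a way that the central contributions telescope. The edges $e_1, e_2, e_3$ emanate from the single point $c$, so the pieces meeting at $c$ tile a full angle ($2\pi$ if $c$ is interior to $T$, $\pi$ if it lies on $\partial T$), and the combinatorial layout at $c$ forces $|e_1|+|e_2|+|e_3|$ to be expressible as a known sum of $\sigma$- and $\tau$-segments read off directly from the configuration. After cancellation, the combined identity collapses to a relation of the form $a\sigma = b\tau$ with $a, b \in \mathbb{Z}_{>0}$ determined by the case, which is impossible since $\sqrt{3}$ is irrational.

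The main obstacle will be the bookkeeping for each of the four pairs in Case B: one must verify, using the diagrams of Figure~\ref{fig:CaseB}, that both integer coefficients $a$ and $b$ end up nonzero so that no accidental cancellation trivializes the equation. Observation~\ref{obs:EG-length}, together with the well-behavedness of each $Q_i$, guarantees that the paired segments across each divide-in-two vertex match length for length and keeps the correspondence between cut sub-segments on the $T$-side and on the $S$-side tightly controlled. A careful case-by-case check then finishes the proof.
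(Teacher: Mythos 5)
Your overall strategy---apply Lemma~\ref{lem:alternate} along the branches of the tree $Y$ from Lemma~\ref{lem:struct_BandC} and derive an impossible length relation between $\sigma$ and $\tau$---is the same engine the paper uses, but as written your argument has a genuine gap at its central step. You assert that the three edges $e_1,e_2,e_3$ emanating from the degree-$3$ vertex $c$ have a total length ``expressible as a known sum of $\sigma$- and $\tau$-segments read off from the configuration.'' This is not justified and is false in general: those are interior cut edges whose lengths are free parameters of the dissection, and nothing about the angles tiling $2\pi$ (or $\pi$) at $c$ constrains their lengths. The paper sidesteps exactly this issue by never isolating a single center-to-leaf alternating sum; instead it concatenates the two along sequences that share each central edge, producing leaf-to-leaf identities in which the central edges cancel, e.g.\ $\|x_1\|+\|y_2\|=\sigma$ or $\|x_1\|=\|y_2\|$. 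Relatedly, your sum of three center-to-leaf identities involves only one terminal sub-segment per vertex of $T$, and those three sub-segments do not combine into full sides of $T$; the indispensable input you are missing is the pairing constraint (the paper's Equations 1/2) that the \emph{six} leaf edges group into the three sides of $T$, i.e.\ $\|x_1\|+\|y_2\|=\|y_1\|+\|z_2\|=\|z_1\|+\|x_2\|=\tau$. Without that, $\tau$ never enters your equation with a nonzero coefficient and no relation $a\sigma=b\tau$ results.

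A second, smaller omission: not every subcase of Case B yields a length contradiction at all. When $Y$ contains exactly one point on an edge of $T$ (the paper's Cases B-1-1 and B-2-2), the contradiction comes from Lemma~\ref{lem:adjacent}---the two monochromatic edges adjacent to that point would have to be joined by a path in $\mathcal{EG}$, forcing a cycle in an acyclic component---not from an alternating-sum computation. Your uniform plan does not cover these configurations, and the enumeration of the possible shapes of $Y$ (how many divide-in-two vertices it contains, and whether they lie on edges of $T$, on edges of $S$, or on T-cut edges) is precisely the case analysis needed to know which of the two mechanisms applies and what the intermediate lengths $L_{i,j}$ are. Finally, note that once the correct identities are in hand the contradiction is elementary ($\sigma\neq\tau$, e.g.\ from $\|z_1\|+\|x_2\|$ equalling both $\sigma$ and $\tau$); the irrationality of $\sqrt{\sqrt{3}}/2$ is sufficient but not the point.
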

    \begin{proof}
        In this case, there exists a well-behaved tree $Y$ by Lemma~\ref{lem:struct_BandC}.
        Let $W_x$, $W_y$, and $W_z$ be the paths in $\VD$ between each leaf and $q$.
        Let $v_x$, $v_y$, and $v_z$ be the three leaves of $Y$ with the sides next to $v_x$ being $x_1$ and $x_2$, those connected to $v_y$ being $y_1$ and $y_2$, and those connected to $v_z$ being $z_1$ and $z_2$.
        Here, the sides $x_1$, $x_2$, $y_1$, $y_2$, $z_1$, and $z_2$ can be grouped into three pairs, each pair forming a side of $T$. 
        
        Therefore, one of the following holds (see the left of Figure~\ref{fig:ex-howto}):
        \begin{itemize}
            \item \textbf{Equation 1:} $x_1 + y_2 = y_1 + z_2 = z_1 + x_2 = \tau$.
            \item \textbf{Equation 2:} $x_1 + z_2 = y_1 + x_2 = z_1 + y_2 = \tau$.
        \end{itemize}
        
        We now consider two cases based on the location of the degree-3 vertex $q$ of $Y$: in Case B-1, $q$ belongs to $G^T$, while in Case B-2, it belongs to $G^S$.

        \textbf{\boldmath Case B-1: The degree-3 vertex $q$ is in $G^T$:}  
        This proof is a repetition of Section~\ref{sec:overview-ex}. 
        We analyze the number of points from $N^{T}_{\mathit{dit}}(\VD)$ and $N^{S}_{\mathit{dit}}(\VD)$ included in each $W_-$.  
        Since the vertices $v_x$, $v_y$, $v_z$ and the convex vertex $q$ are all included in $G^T$, the weight of both endpoint nodes of each $W_-$ is at most $\pi$.  
        By Observation~\ref{obs:VG-anglesum}, each $W_-$ contains at least one node of $N^{S}_{\mathit{dit}}(\VD)$, and the number of $N^{T}_{\mathit{dit}}(\VD)$ is one less than that of $N^{S}_{\mathit{dit}}(\VD)$.  
        Since $\| N^{S}_{\mathit{dit}}(\VD) \| \leq 4$, each path contains at most two elements from $N^{S}_{\mathit{dit}}(\VD)$ and at most one from $N^{T}_{\mathit{dit}}(\VD)$.  
        If $Y$ includes a point in $N^{T}_{\mathit{dit}}(\VD)$, it must be a side vertex of $G^T$.  

        We now analyze how many side vertices are included in $Y$.  
        If $Y$ includes exactly one side vertex $t$, then the only monochromatic sides connected to $Y$ are the two adjacent to $t$. 
        Therefore, the connected component including $t$ must contain a cycle by Lemma~\ref{lem:useful}, contradicting the fact that $Y$ is a tree.  
        If $Y$ includes no side vertex of $T$, then its structure must be one of the two cases shown in Figure~\ref{fig:tree-B-1p}.  
        By Lemma~\ref{lem:alternate}, we obtain the following equations:
        \begin{itemize}
            \item \textbf{Equations A:} $\| x_1 \| + \| y_2 \| = \sigma$, \quad $\| y_1 \| + \| z_2 \| = \sigma$, \quad $\| z_1 \| + \| x_2 \| = \sigma$.
            \item \textbf{Equations B:} $\| x_1 \| = \| y_2 \|$, \quad $\| y_1 \| = \| z_2 \|$, \quad $\| z_1 \| + \| x_2 \| = \sigma$.
        \end{itemize}
        Here, even if $Y$ contains a flat node, Lemma~\ref{lem:U-shape} implies that the side of the piece being divided has length $\sigma$.  
        From Equations A, we derive:
        $$\| x_1 \| + \| x_2 \| + \| y_1 \| + \| y_2 \| + \| z_1 \| + \| z_2 \| = 3\sigma,$$
        which contradicts Equation 1 and 2 since $\sigma \neq \tau$.  
        Similarly, Equations B also contradicts Equation 1 and 2.
        Specifically, when Equations B and Equation 1 hold, we have $\| z_1 \| + \| x_2 \| = \sigma$ and $\| z_1 \| + \| x_2 \| = \tau$, which is a contradiction. 
        When Equations B and Equation 2 hold, we have $\| z_1 \| + \| y_2 \| = \sigma$ and $\| z_1 \| + \| y_2 \| = \tau$, which is also a contradiction.
        
        \textbf{\boldmath Case B-2: The degree-3 vertex $q$ is in $G^S$:}  
        Here, $q$ is a boundary vertex, and the divided sides are denoted as $e_1$ and $e_2$. Let $f_1$ and $f_2$ be their respective endpoints.  
        Without loss of generality, we assume that $x_2$ and $e_1$, $z_1$ and $e_2$, $x_1$ and $y_2$, and $y_1$ and $z_2$ share along paths.  
        Since $q$ is in $G^S$, all surrounding angles are less than $\pi$, and each $W_-$ contains at least one non-paired node. 
        The first non-paired node encountered is either $q$ or an element of $N^{S}_{\mathit{dit}}(\VD)$.  
        
        If $Y$ includes two side nodes $t$ and $t^{\prime}$ on sides of $T$, then $W_x$ and $W_z$ contain monochromatic edges.  
        Since at least one of $v_x, v_y$, and $v_z$ is adjacent to monochromatic edges on both sides (specifically, $v_y$), $W_y$ contains no vertex on edges of $T$.  
        The possible structures are shown in Figure~\ref{fig:CaseB-2-1}.  
        Depending on whether we are considering the left or right case in the figure, by Lemma~\ref{lem:alternate}, we have one of the following:
        \begin{itemize}
            \item \textbf{Equations C:} $\| x_1 \| = \| y_2 \|$, \quad $\| y_1 \| = \| z_2 \|$, \quad $\| z_1 \| + \| x_2 \| = \sigma$.
            \item \textbf{Equations D:} $\| x_1 \| + \| y_2 \| = \tau$, \quad $\| y_1 \| + \| z_2 \| = \tau$, \quad $\| z_1 \| + \| x_2 \| = \sigma$.
        \end{itemize}
        Equations C are equivalent to Equations B, and Equations D lead to $| x_1 | + | x_2 | + | y_1 | + | y_2 | + | z_1 | + | z_2 | = 2\tau + \sigma$.
        Therefore, both contradict Equation 1 and Equation 2.
        \begin{figure}[htbp]
            \centering
            \includegraphics[width=0.5\textwidth]{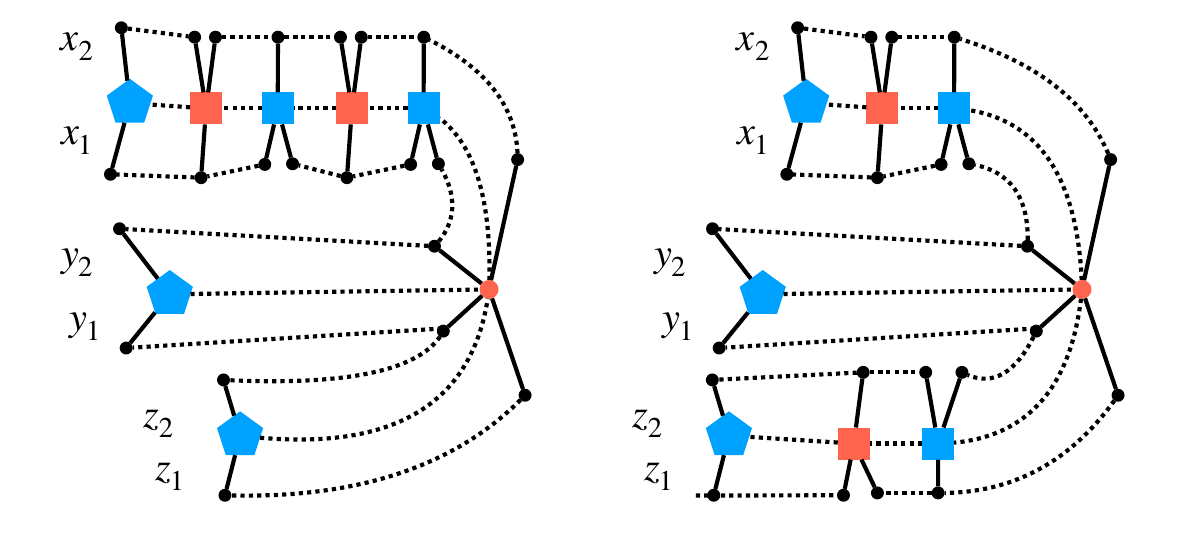}
            \caption{The possible structures for $P$ in Case B-2-1.}
            \label{fig:CaseB-2-1}
        \end{figure}   

        If $Y$ includes one side node $t$ of $T$, the only monochromatic edges connected to $Y$ are the two adjacent to $t$. These edges should be connected via paths in $\ED$, contradicting Corollary~\ref{lem:adjacent}.  

        Finally, if $Y$ includes no side node of $T$, then $Y$ contains at most one vertex from $N^{T}_{\mathit{dit}}(\VD)$, which is a flat node.  
        By Lemma~\ref{lem:U-shape}, the side length of the divided piece must be $\sigma$.
        Consequently, the possible patterns are the seven shown in Figure~\ref{fig:tree-B-2-2}.         
        This again leads to Equations A or B, contradicting Equation 1 and 2.
        \begin{figure}[htbp]
            \centering
            \includegraphics[width=\textwidth]{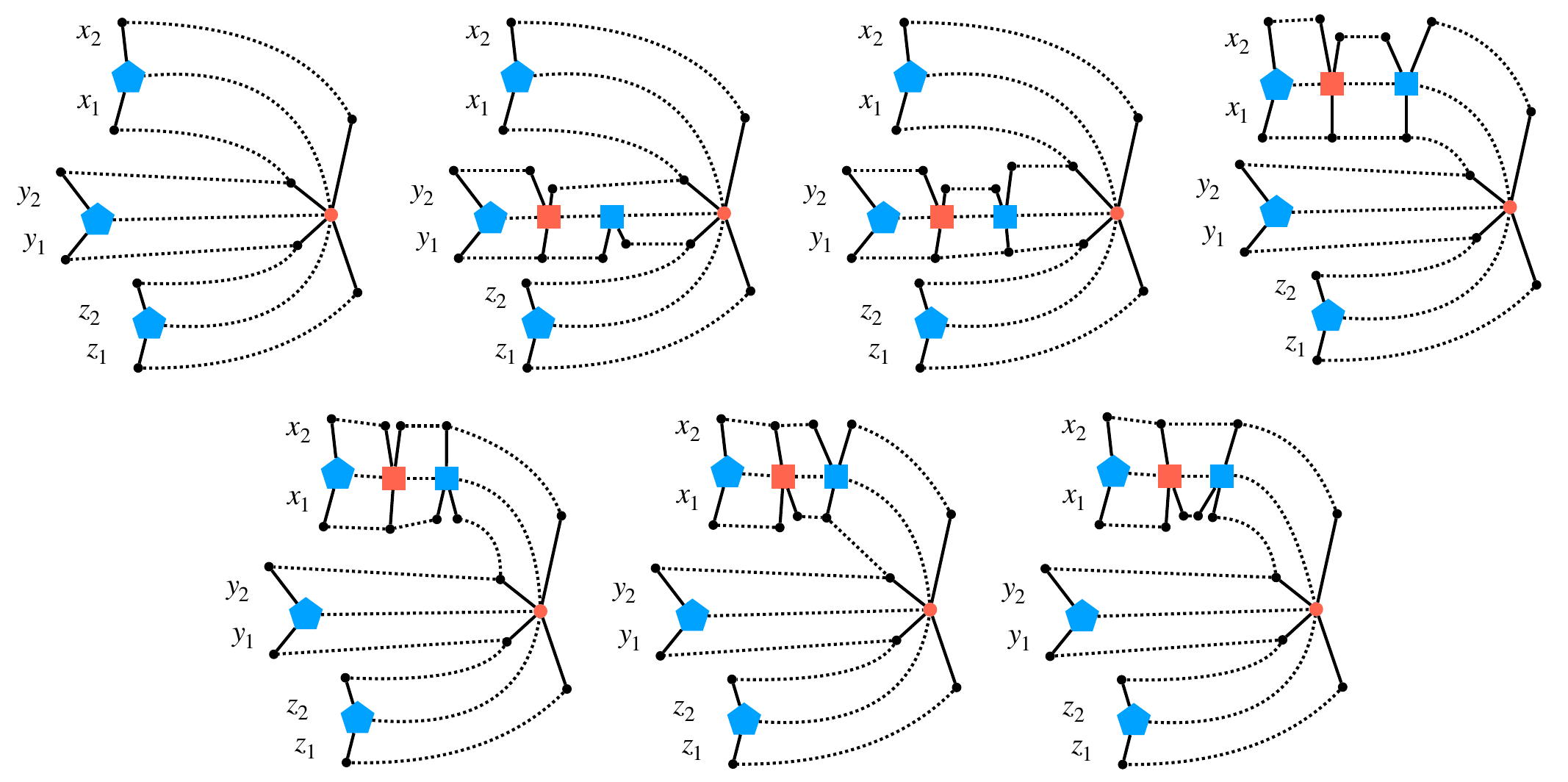}
            \caption{The possible structures for $W$ in Case B-2-3.}
            \label{fig:tree-B-2-2}
        \end{figure}
    \end{proof}
    \begin{lemma}
        Case C is infeasible.
    \end{lemma}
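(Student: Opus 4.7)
The plan is to apply Lemma~\ref{lem:alternate} to the well-behaved path $P$ in $\VG$ guaranteed by Lemma~\ref{lem:struct_BandC}, which connects two vertices $v, v'$ of $S$, and combine the resulting equation with boundary-length constraints on $S$ to force an integer-coefficient linear dependence between $\sigma$ and $\tau$, contradicting the irrationality of $\sigma/\tau = \sqrt{\sqrt{3}}/2$. This mirrors the strategy of Case B, but with the roles of $S$ and $T$ (and $\sigma, \tau$) interchanged: there, a tree $Y$ at vertices of $T$ gave equations against the side length $\tau$; here, a path $P$ between vertices of $S$ will give equations against $\sigma$.

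First, I label the two edges of pieces at $v$ lying along the two boundary sides of $S$ meeting at $v$ as $x_1, x_2$, and analogously $y_1, y_2$ at $v'$. The two along sequences of $P$ (one on each side of $P$'s starting edge) naturally pair the $x$-edges with the $y$-edges via paths in $\EG$. Applying Lemma~\ref{lem:alternate} to each along sequence yields an equation of the form $\|x_i\| \pm \|y_j\| + \sum_\ell (-1)^\ell \|L_\ell\| = 0$, where each $L_\ell$ is a full edge divided at a $V^X_{\mathit{dit}}(\VG)$-point: either a side of $S$ (length $\sigma$), a side of $T$ (length $\tau$), or a T-cut edge. For T-cut edges arising from cuts on a U-shape boundary of $S$, Lemma~\ref{lem:U-shape} pins their length to $\sigma$.

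Combining the two alternating-sum equations with the boundary constraint that the two segments on each side of $S$ meeting at a vertex sum to $\sigma$, I expect to obtain, in each sub-case, a non-trivial relation of the form $a\sigma + b\tau = 0$ with $a, b \in \mathbb{Z}$ and $b \neq 0$, forcing $\sigma/\tau \in \mathbb{Q}$ and hence a contradiction. For sub-cases in which the relation degenerates to an equality only among the $x_i, y_j$, the argument from Case B-1-1 / Case B-2-2 takes over: the only way to connect the remaining monochromatic boundary edges via $\EG$ forces two edges meeting at a common vertex to share a path in $\EG$, creating a cycle in a component of $\VG$ that Lemma~\ref{lem:struct_BandC} already tells us is a simple path---directly contradicting Lemma~\ref{lem:adjacent}.

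The main obstacle is the case analysis over the twelve sub-cases listed in Case C. In particular, the combinations $\{G^T_3, G^T_4\} \times \{G^S_{16}, G^S_{38}\}$ and $\{G^T_2, G^T_5\} \times \{G^S_{11}\}$, which involve T-cuts or trisected mid-edges along $P$, require the extended monochromaticity argument of Lemma~\ref{lem:struct_BandC} to verify well-behavedness, and then repeated use of Lemma~\ref{lem:U-shape} to fix the lengths of those T-cut edges to~$\sigma$. The other combinations are structurally simpler but still require a figure-by-figure inspection to identify the correct endpoints of $P$, the along sequences, and the signs in the alternating sum; once this combinatorial work is done, the final algebraic step that produces a rational $\sigma/\tau$ is routine.
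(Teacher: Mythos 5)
Your overall strategy matches the paper's: for the well-behaved path $P$ between vertices of $S$, apply Lemma~\ref{lem:alternate} to its along sequences to force an integer relation $a\sigma + b\tau = 0$ (the paper indeed derives $4\sigma = 3\tau$ and $3\sigma = 2\tau$ in several sub-cases), and fall back on the monochromatic-edge/adjacency argument via Lemma~\ref{lem:adjacent} when the alternating sums degenerate. The paper organizes the sub-cases by how many divide-in-two points of $P$ lie on edges of $S$ (two, one, or none) rather than by the twelve graph combinations, but that is a presentational difference.

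There is, however, a concrete gap: you assume every sub-case terminates in one of your two contradiction types, and that is false for the configuration the paper calls Case C-3-2 (arising in $G^S_{10}, G^S_{20}, G^S_{27}$), where $P$ joins two diagonal vertices of $S$ and passes through no point on an edge of $S$ and no T-cut edge. There the along sequences of $P$ pair the boundary edges of $S$ at its endpoints with the two halves of a bisected edge of $T$; the resulting length relations only say that the non-matched pair sums to $\tau$, which rules out their forming a single side of $S$ but produces neither a rational ratio $\sigma/\tau$ nor two adjacent edges forced onto a common $\EG$-path. The paper must instead leave $P$ entirely, analyze the \emph{other} connected component of $\VG$ containing the bisected vertex of $S$, show by Lemma~\ref{lem:VG-component} and the forced $\EG$-connections that it is a tadpole graph, extract from it a well-behaved tree $Y$ joining the three vertices of $T$, and then invoke the whole of the Case~B analysis. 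Your plan, which works only with the path $P$ and its along sequences, has no mechanism to close this sub-case; you would need to add this reduction to Case~B (or an equivalent argument on the component containing the vertices of $T$) to complete the proof.
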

    \begin{proof}
        We classify the cases as follows:
        \begin{itemize}
            \item \textbf{Case C-1:} $\{G^T_{\mathfrak{D}2}, G^T_{\mathfrak{F}1}\} \times \{G^S_{\mathfrak{D}5}\}, \{G^T_{\mathfrak{D}3}, G^T_{\mathfrak{E}1}\} \times \{G^S_{\mathfrak{D}10}, G^S_{\mathfrak{F}10}\}$
            \item \textbf{Case C-2:} $\{G^T_{\mathfrak{D}1}\} \times \{G^S_{\mathfrak{A}4}, G^S_{\mathfrak{B}1}\}, \{G^T_{\mathfrak{D}3}, G^T_{\mathfrak{E}1}\} \times \{G^S_{\mathfrak{A}6}, G^S_{\mathfrak{B}3}, G^S_{\mathfrak{C}1}\}$
            \item \textbf{Case C-3:} $\{G^T_{\mathfrak{D}1}\} \times \{G^S_{\mathfrak{D}3}, G^S_{\mathfrak{D}4}\}, \{G^T_{\mathfrak{D}3}, G^T_{\mathfrak{E}1}\} \times \{G^S_{\mathfrak{D}13}, G^S_{\mathfrak{D}14}, G^S_{\mathfrak{E}5}, G^S_{\mathfrak{E}6}, G^S_{\mathfrak{E}7}\}$
        \end{itemize}
        
        By Lemmas~\ref{lem:struct_BandC} and \ref{lem:VG-component}, the structure of the connected components in each case can be classified as follows:
        \begin{itemize}
            \item \textbf{Case C-1:} 
            \begin{itemize}
                \item A well-behaved path $W$ and a tree $Y$ with a degree-3 vertex.
            \end{itemize}
            \item \textbf{Case C-2:} 
            \begin{itemize}
                \item A well-behaved path $W_1$, a path $W_2$, and some component $C$, or
                \item A well-behaved path $W_1$ and some component $C$.
            \end{itemize}
            \item \textbf{Case C-3:} 
            \begin{itemize}
                \item A well-behaved path $W$, a tadpole graph $L$, which is a graph obtained by joining a cycle graph to a path graph with a bridge of degree 3, and a tree $Y$ with a degree-3 vertex, or
                \item A well-behaved path $W$ and a tree $Y$ with a degree-4 vertex.
            \end{itemize}
        \end{itemize}
    
        \textbf{\boldmath Case C-1:} $\{G^T_{\mathfrak{D}2}, G^T_{\mathfrak{F}1}\} \times \{G^S_{\mathfrak{D}5}\}, \{G^T_{\mathfrak{D}3}, G^T_{\mathfrak{E}1}\} \times \{G^S_{\mathfrak{D}10}, G^S_{\mathfrak{F}10}\}$:
        As shown in the proof of Lemma~\ref{lem:struct_BandC}, the node in $\ED$ that shares a path with the side of $S$ where both adjacent corners have degree-2 in $\VD$ must be either a trisected-mid node or a flat node. 
        In $G^S_{\mathfrak{F}10}$, the trisected-mid node lies on the side of $S$, however its length is strictly shorter than the side length of $S$, leading to a contradiction. 
        Thus, we only consider $G^S_{\mathfrak{D}5}$ and $G^S_{\mathfrak{D}10}$.
        Since the endpoints $w$ and $w^{\prime}$ of $W$ are corner vertices of $S$, if $W$ contains a divide-in-two vertex $t$ on the side of $S$, only the two next sides of $t$ become monochromatic (Figure~\ref{fig:case-C-1}).
        This contradicts Lemma~\ref{lem:useful}.
        Thus, we only consider the case where $W$ does not contain a divide-in-two vertex on the side of $S$.
        \begin{figure}[htbp]
            \centering
            \includegraphics[width=0.25\textwidth]{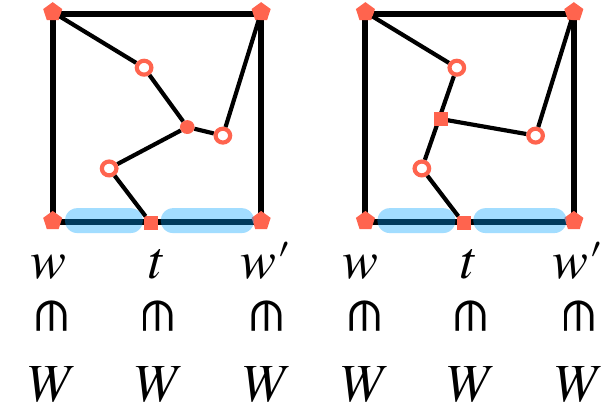}
            \caption{Case where $W$ contains a divide-in-two vertex $t$ on the side of $S$ (left: $G^S_{\mathfrak{D}5}$, right: $G^S_{\mathfrak{D}10}$).}
            \label{fig:case-C-1}
        \end{figure}

        \textbf{\boldmath Case C-1-1:} $\{G^T_{\mathfrak{D}2}, G^T_{\mathfrak{F}1}\} \times \{G^S_{\mathfrak{D}5}\}$: 
        Since $W$ does not contain a divide-in-two vertex on the side of $S$, there is exactly one divide-in-two vertex on the side of $T$ contained in $W$.
        Thus, the structure of $W$ must be one of the two cases shown in Figure~\ref{fig:case-C-1-1}. 
        In these cases, one of the along sequences consists of a single path, while the other consists of two paths.
        According to Lemma~\ref{lem:alternate}, the sum of the lengths of the paths in the along sequence consisting of two paths should equal $\tau$.
        However, the length of the counterclockwise-next side of $w$ and clockwise-next side of $w^{\prime}$ are both $\sigma$, and the sum of the lengths of the clockwise next side of $w$ and the counterclockwise next side of $w^{\prime}$ is also $\sigma$.
        As a result, it is impossible for either sum to be $\tau$, leading to a contradiction.

        \begin{figure}[htbp]
            \centering
            \includegraphics[width=\textwidth]{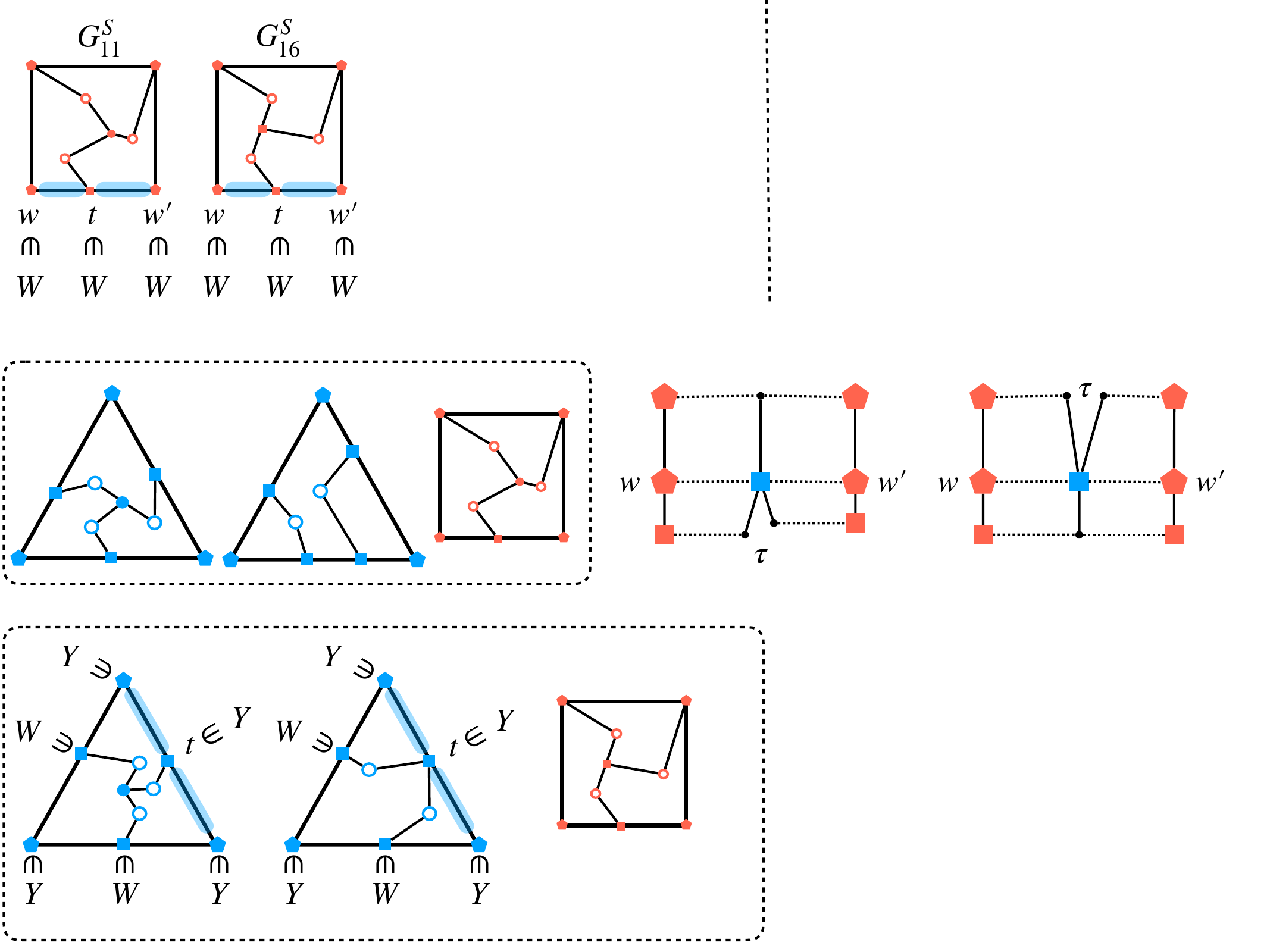}
            \caption{Possible structures in the case where $\{G^T_{\mathfrak{D}2}, G^T_{\mathfrak{F}1}\} \times \{G^S_{\mathfrak{D}5}\}$.}
            \label{fig:case-C-1-1}
        \end{figure}

        \textbf{\boldmath Case C-1-2:} $\{G^T_{\mathfrak{D}3}, G^T_{\mathfrak{E}1}\} \times \{G^S_{\mathfrak{D}10}\}$: 
        Since $W$ does not contain a divide-in-two vertex on the side of $S$, $W$ either contains only one divide-in-two vertex on the side of $T$, or it contains one flat vertex on $S$ and two divide-in-two vertices on the side of $T$.
        In the former case, a contradiction arises in the same way as in Case C-1-1.
        In the latter case, the remaining boundary vertex $t$ on $T$ that is not contained in $W$ is contained in $Y$.
        (There is no divide-in-two vertex on $S$ that is not contained in $W$, so $t$ cannot be part of a cycle.)
        Since $t$ is contained in $Y$, only its next side becomes monochromatic, which contradicts Lemma~\ref{lem:useful} (Figure~\ref{fig:case-C-1-2}).

        \begin{figure}[htbp]
            \centering
            \includegraphics[width=0.6\textwidth]{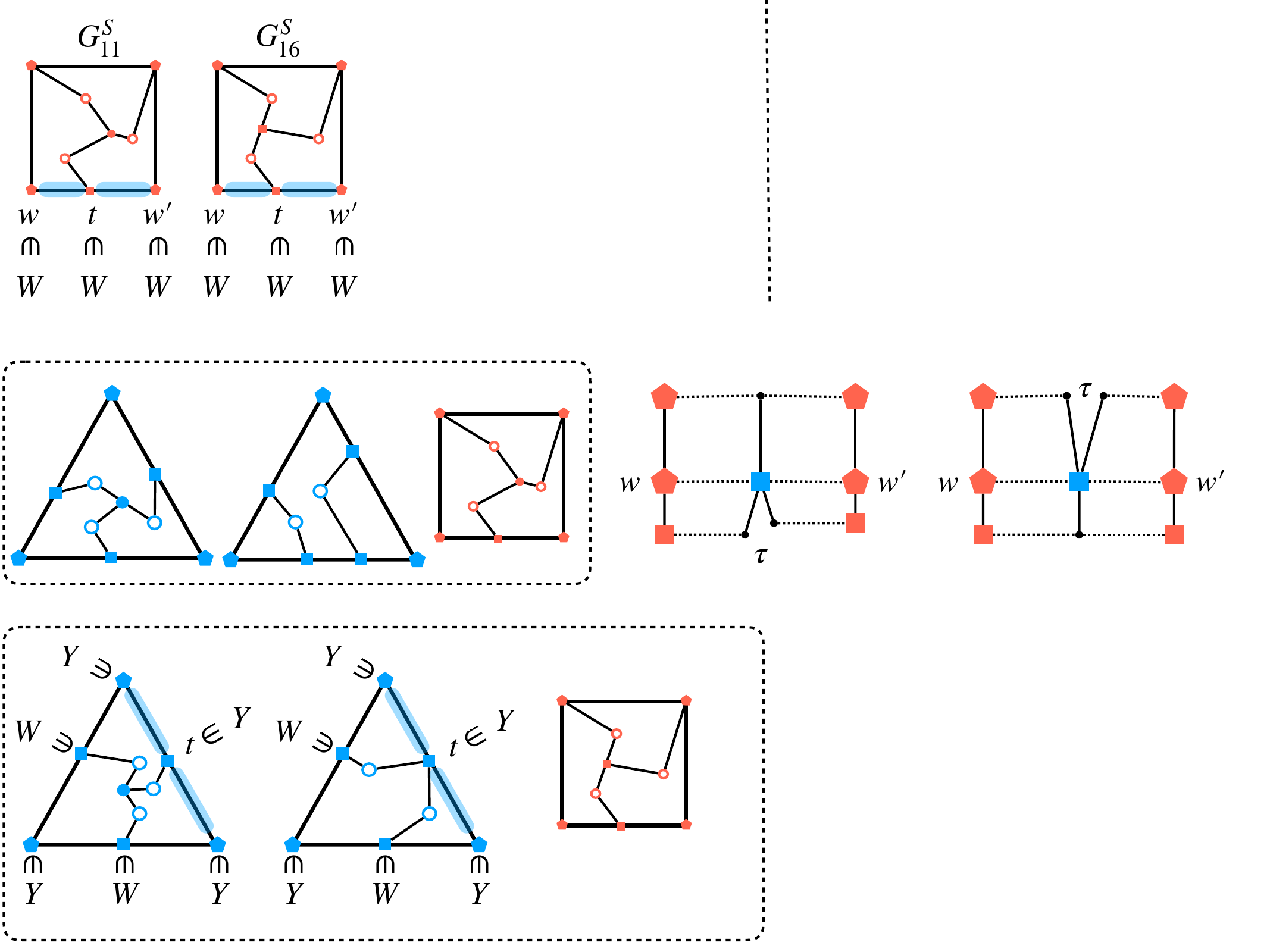}
            \caption{Possible structures in the case where $\{G^T_{\mathfrak{D}3}, G^T_{\mathfrak{E}1}\} \times \{G^S_{\mathfrak{D}10}\}$.}
            \label{fig:case-C-1-2}
        \end{figure}
    
        \textbf{\boldmath Case C-2:} $\{G^T_{\mathfrak{D}1}\} \times \{G^S_{\mathfrak{A}4}, G^S_{\mathfrak{B}1}\}, \{G^T_{\mathfrak{D}3}, G^T_{\mathfrak{E}1}\} \times \{G^S_{\mathfrak{A}6}, G^S_{\mathfrak{B}3}, G^S_{\mathfrak{C}1}\}$:
        If $\VD$ has two separate paths $W_1$ and $W_2$, a contradiction arises by Lemma~\ref{lem:U-shape}.
        Therefore, we consider the case where there is only one path $W$ contained in $\VD$.
        By Lemma~\ref{lem:U-shape}, it suffices to consider the case where the endpoints $w$ and $w^{\prime}$ of $W$ belong to different U-shaped boundaries.
        There are two possibilities: either $w$ and $w^{\prime}$ are diagonal corners of $S$, or $w$ and $w^{\prime}$ are the endpoint corners of a single side of $S$.

        \textbf{\boldmath Case C-2-1: the case where $w$ and $w^{\prime}$ are diagonal corners of $S$}: 
        If $W$ contains a divide-in-two node on the side of $S$, it contradicts Lemma~\ref{lem:useful} (see the left of Figure~\ref{fig:case-C-2-1}). 
        Conversely, if $W$ does not contain a divide-in-two node on the side of $S$, the structure of $W$ must be one of the two cases shown in the right of Figure~\ref{fig:case-C-2-1}. 
        In both cases, one of the along sequences of $W$ forms a single path in $\ED$, however the next sides of $w$ and $w^{\prime}$ in different directions have different lengths, leading to a contradiction to Observation~\ref{obs:EG-length}.

        \begin{figure}[htbp]
            \centering
            \includegraphics[width=1\textwidth]{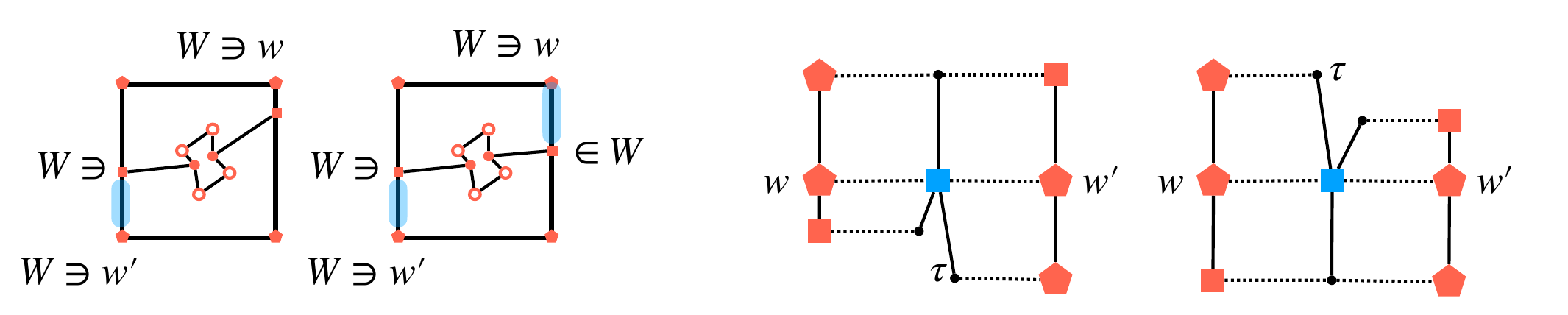}
            \caption{Possible structures in the case where $w$ and $w^{\prime}$ are diagonal corners of $S$.}
            \label{fig:case-C-2-1}
        \end{figure}

        \textbf{\boldmath Case C-2-2: the case where $w$ and $w^{\prime}$ are the endpoint corners of a single side of $S$}: 
        If $W$ contains two divide-in-two nodes on $S$, one of them divides the side whose endpoints are the corner vertices $w$ and $w^{\prime}$. The boundary edges corresponding to the divided side are both monochromatic, which contradicts Lemma~\ref{lem:useful}.
        For the other cases, where $W$ contains at most one divide-in-two node on $S$, a contradiction arises in the same way as in Case C-1-2.
        
        \textbf{\boldmath Case C-3:} $\{G^T_{\mathfrak{D}1}\} \times \{G^S_{\mathfrak{D}3}, G^S_{\mathfrak{D}4}\}, \{G^T_{\mathfrak{D}3}, G^T_{\mathfrak{E}1}\} \times \{G^S_{\mathfrak{D}13}, G^S_{\mathfrak{D}14}, G^S_{\mathfrak{E}5}, G^S_{\mathfrak{E}6}, G^S_{\mathfrak{E}7}\}$: 
        If $\VD$ contains a tadpole graph $L$, then the degree-2 corner vertex of $S$ (along with one of the degree-1 corner vertices) is included in $L$, and the tree $Y$ is well-behaved. 
        In this case, a contradiction arises using the same argument as in Case B. 
        Therefore, we focus on the scenario where $\VD$ consists of a well-behaved path $W$ and a tree $Y$ with a degree-4 vertex. 
        In each case, there exist two sides that are not divided by a side vertex, which we denote as $e$ and $e^{\prime}$.
                
        \textbf{\boldmath Case C-3-1: the case where $e$ and $e^{\prime}$ are connected by a path in $\ED$}: 
        In the cases of $\{G^T_{\mathfrak{D}1}\} \times \{G^S_{\mathfrak{D}4}\}, \{G^T_{\mathfrak{D}3}, G^T_{\mathfrak{E}1}\} \times \{G^S_{\mathfrak{D}14}, G^S_{\mathfrak{E}5}\}$, Lemma~\ref{lem:side-component} implies that a degree-2 corner vertex is included in a cycle, contradicting the fact that $\VD$ does not contain a tadpole graph.
        In other cases, such as $\{G^T_{\mathfrak{D}1}\} \times \{G^S_{\mathfrak{D}3}\}, \{G^T_{\mathfrak{D}3}, G^T_{\mathfrak{E}1}\} \times \{G^S_{\mathfrak{D}13}, G^S_{\mathfrak{E}6}, G^S_{\mathfrak{E}7}\}$, computing the along sequence of $W$ on the side that is not $e$, regardless of the number of divide-in-two vertices contained in $W$, leads to a contradiction with Lemma~\ref{lem:alternate}.

        \textbf{\boldmath Case C-3-2: the case where $e$ and $e^{\prime}$ are not connected by a path in $\ED$}: 
        Since $e$ and $e^{\prime}$ both have a length of $\sigma$, the other endpoints of the $\ED$ paths originating from each of them must be different edges on the boundary of $T$.
        Therefore, $W$ contains at least two divide-in-two vertices on the sides of $T$ and at least one divide-in-two vertex $t$ on the side of $S$.
        Let the endpoints of $W$ be $w$ and $w^{\prime}$. 
        If $t$ shares a side on the boundary of $S$ with at least one of $w$ or $w^{\prime}$, then the monochromatic edges whose endpoints are contained in $W$ must either exist in an odd number, exist in a pair with different orientations, or exist in a pair that shares an endpoint. 
        In any case, this contradicts Lemma~\ref{lem:useful}.
        Therefore, it is sufficient to consider the case where $t$ shares no side on the boundary of $S$ with $w$ and $w^{\prime}$.
        This can occur in $\{G^T_{\mathfrak{D}1}\} \times \{G^S_{\mathfrak{D}3}\}$ and $\{G^T_{\mathfrak{D}3}, G^T_{\mathfrak{E}1}\} \times \{G^S_{\mathfrak{D}13}\}$, where the endpoints of $W$ are on a single side of $S$, and $t$ is located on its opposite side (Figure~\ref{fig:case-C-3-2}).
        However, in this case, focusing on the remaining divide-in-two vertices on the side of $T$ leads to a contradiction for the same reason as in Case C-1-2.
    \end{proof}

        \begin{figure}[htbp]
            \centering
            \includegraphics[width=1\textwidth]{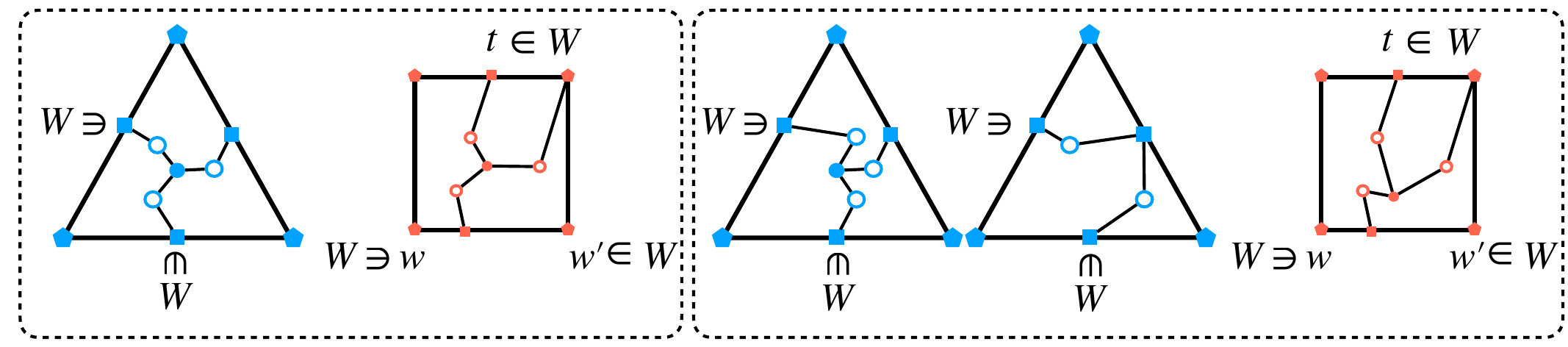}
            \caption{The case where $t$ shares no side on the boundary of $S$ with $w$ and $w^{\prime}$.
Left: $\{G^T_{\mathfrak{D}1}\} \times \{G^S_{\mathfrak{D}3}\}$; Right: $\{G^T_{\mathfrak{D}3}, G^T_{\mathfrak{E}1}\} \times \{G^S_{\mathfrak{D}13}\}$.}
            \label{fig:case-C-3-2}
        \end{figure}

\section{Conclusion}
    In this paper, we demonstrated that a three-piece dissection between a square and an equilateral triangle does not exist. 
    The concept of matching diagrams, which was central to our proof, seems to be broadly applicable as a method for analyzing dissections. 

    We believe that our approach can be extended to handle the case where pieces can be flipped over. The challenge is to do so without vastly increasing the number of cases that need to be considered.
    We highlight this and other open problems:
    \begin{itemize}
        \item Is a three-piece dissection still impossible if flipping is allowed?
        \item Is a three-piece dissection still impossible if we allow nonpolygonal (curved) pieces?
        \item Are there only a finite number of rectangles that can be dissected into three pieces from a triangle? If so, how can they be enumerated?
        \item Are there any other four-piece dissections between an equilateral triangle and a square, aside from Figure~\ref{fig:dudeney}?
        \item Are there any pairs of regular $n$-gons and $m$-gons that can be dissected into three pieces, where $n \neq m$?
    \end{itemize}
\section*{Acknowledgement}
    The authors would like to thank the members of the Tokyo-based math club “sugakuday,” a group that describes itself as a gathering of people who like math—or don’t.
    During a visit by Tonan Kamata, a discussion with the group—particularly with Motcho Ajisaka and Fueruma (\url{@motcho_tw}, \url{@fueruma_suugaku} on $\mathbb{X}$)—led to the discovery of an oversight in the enumeration of square dissections in the previous version of this paper.
    We express our deep respect for their open, playful, and curiosity-driven community.
    
    \bibliography{main}

\begin{thebibliography}{10}

\bibitem{HingedDissection}
Timothy~G. Abbott, Zachary Abel, David Charlton, Erik~D. Demaine, Martin~L. Demaine, and Scott~D. Kominers.
\newblock Hinged dissections exist.
\newblock In {\em Proceedings of the Twenty-Fourth Annual Symposium on Computational Geometry}, SCG '08, page 110–119, New York, NY, USA, 2008. Association for Computing Machinery.
\newblock \href {https://doi.org/10.1145/1377676.1377695} {\path{doi:10.1145/1377676.1377695}}.

\bibitem{Bolyai-1832}
Farkas Bolyai.
\newblock {\em Tentamen juventutem studiosam in elementa matheseos purae, elementaris ac sublimioris, methodo intuitiva, evidentiaque huic propria, introducendi}.
\newblock Typis Collegii Refomatorum per Josephum et Simeonem Kali, Maros V\'as\'arhely, 1832--1833.

\bibitem{DissectionHard}
Jeffrey Bosboom, Erik~D. Demaine, Martin~L. Demaine, Jayson Lynch, Pasin Manurangsi, Mikhail Rudoy, and Anak Yodpinyanee.
\newblock Dissection with the fewest pieces is hard, even to approximate.
\newblock In Jin Akiyama, Hiro Ito, Toshinori Sakai, and Yushi Uno, editors, {\em Revised Papers from the 18th Japan Conference on Discrete and Computational Geometry and Graphs}, volume 9943 of {\em Lecture Notes in Computer Science}, pages 37--48, 2016.
\newblock \href {https://doi.org/10.1007/978-3-319-48532-4_4} {\path{doi:10.1007/978-3-319-48532-4_4}}.

\bibitem{Cohn-1975}
M.~J. Cohn.
\newblock Economical triangle-square dissection.
\newblock {\em Geometriae Dedicata}, 3(4):447--467, February 1975.
\newblock \href {https://doi.org/10.1007/BF00181377} {\path{doi:10.1007/BF00181377}}.

\bibitem{dudeney1902}
H.~E. Dudeney.
\newblock {\em Puzzles and Prizes}.
\newblock Weekly Dispatch, 1902.
\newblock April 6, April 20, May 4.

\bibitem{Dudeney-1908-canterbury}
Henry~Ernest Dudeney.
\newblock {\em The Canterbury Puzzles and Other Curious Problems}.
\newblock E. P. Dutton and Company, New York, 1908.

\bibitem{Dudeney-1917-amusements}
Henry~Ernest Dudeney.
\newblock {\em Amusements in Mathematics}.
\newblock Thomas Nelson and Sons, London, 1917.

\bibitem{El-Khechen-Iacono-Fevens-2008}
Dania El-Khechen, John Iacono, and Thomas~G. Fevens.
\newblock Partitioning a polygon into two mirror congruent pieces.
\newblock In {\em Proceedings of the 20th Canadian Conference on Computational Geometry}, 2008.
\newblock URL: \url{https://cccg.ca/proceedings/2008/paper32full.pdf}.

\bibitem{Frederickson-1997}
Greg~N. Frederickson.
\newblock {\em Dissections: Plane and Fancy}.
\newblock Cambridge University Press, November 1997.

\bibitem{Frederickson-2002}
Greg~N. Frederickson.
\newblock {\em Hinged Dissections: Swinging \& Twisting}.
\newblock Cambridge University Press, August 2002.

\bibitem{Gardner-1961-dudeney}
Martin Gardner.
\newblock {Henry Ernest Dudeney}: England's greatest puzzlist.
\newblock In {\em The Second Scientific American Book of Mathematical Puzzles and Diversions}, chapter~3. The University of Chicago Press, Chicago, 1961.

\bibitem{Gardner-1969-dissection}
Martin Gardner.
\newblock Geometric dissections.
\newblock In {\em The Unexpected Hanging and Other Mathematical Diversions}, chapter~4. The University of Chicago Press, Chicago, 1969.

\bibitem{Gerwien-1833}
P.~Gerwien.
\newblock Zerschneidung jeder beliebigen {A}nzahl von gleichen geradlinigen {F}iguren in dieselben {S}t{\"u}cke.
\newblock {\em Journal f\"ur die reine und angewandte Mathematik (Crelle's Journal)}, 10:228--234 and Taf.\ III, 1833.

\bibitem{Kranakis-Krizanc-Urrutia-2000}
Evangelos Kranakis, Danny Krizanc, and Jorge Urrutia.
\newblock Efficient regular polygon dissections.
\newblock {\em Geometriae Dedicata}, 80:247--262, 2000.

\bibitem{Lindgren-1972}
Harry Lindgren.
\newblock {\em Recreational Problems in Geometric Dissections and How to Solve Them}.
\newblock Dover Publications, Inc., 1972.
\newblock Revised and enlarged by Greg Frederickson.

\bibitem{Lowry-1814}
Mr. Lowry.
\newblock Solution to question 269, [proposed] by {Mr. W. Wallace}.
\newblock In T.~Leybourn, editor, {\em Mathematical Repository}, volume~3, part~1, pages 44--46. W. Glendinning, London, 1814.

\bibitem{Rote-1997}
G.~Rote.
\newblock Some thoughts about decomposing a polygon into two congruent pieces.
\newblock Unfinished Draft, June 1997.
\newblock URL: \url{https://page.mi.fu-berlin.de/rote/Papers/pdf/Decomposition+of+a+polytope+into+two+congruent+pieces.pdf}.

\bibitem{Wallace-1831}
William Wallace, editor.
\newblock {\em Elements of Geometry}.
\newblock Bell \& Bradfute, Edinburgh, 8th edition, 1831.

\end{thebibliography}
\end{document}